\definecolor{green}{rgb}{0.0, 0.5, 0.0}
\numberwithin{equation}{section}
\newtheorem{thm}{Theorem}[section]
\newtheorem{dfn}[thm]{Definition}
\newtheorem{prop}[thm]{Proposition}
\newtheorem{cor}[thm]{Corollary}
\newtheorem{exmpl}[thm]{Example}
\newtheorem{rmrk}[thm]{Remark}
\newcommand\restr[2]{{
  \left.\kern-\nulldelimiterspace 
  #1 
  \right|_{#2} 
}}
\newcommand*{\transp}[2][-3mu]{\ensuremath{\mskip1mu\prescript{\smash{\mathrm t\mkern#1}}{}{\mathstrut#2}}}%
\newcommand{\R}{\mathbb{R}}
\renewcommand{\d}{\mathrm{d}}
\let \dd \d
\newcommand{\Cinfty}{\mathscr{C}^\infty}
\newcommand{\T}{\mathrm{T}}
\newcommand{\cT}{\mathrm{T}^\ast}
\newcommand{\Lie}{\mathscr{L}}
\newcommand{\X}{\mathfrak{X}}
\renewcommand{\L}{L}
\newcommand{\F}{\mathcal{F}}
\newcommand{\Reeb}{R}
\newcommand{\Rt}{\Reeb_{t}}
\newcommand{\Rz}{\Reeb_{z}}
\let\Rs\Rz
\newcommand{\parder}[2]{\frac{\partial #1}{\partial #2}}
\newcommand{\tparder}[2]{\partial #1/\partial #2}
\newcommand{\parderr}[3]{\frac{\partial^2 #1}{\partial #2\partial #3}}
\DeclareMathOperator{\Diff}{Diff}
\DeclareMathAlphabet{\mathpzc}{OT1}{pzc}{m}{it}
\def\d{\mathrm{d}}
\DeclareMathOperator{\id}{id}
\DeclareMathOperator{\pro}{pr}
\newcommand{\pr}[1]{\pro_{#1}}
\newcommand{\liedv}[1]{\Lie_{#1}}
\newcommand*{\contr}[1]{\iota_{#1}}
\newcommand*{\contrp}[1]{\iota\left(#1\right)}
\let\oldemph\emph
\let\emph\textbf
\renewcommand{\jourvoldelim}{\addcomma\space}
\title{{\sffamily Symmetries, conservation and dissipation \\in time-dependent contact systems}}
\author{{\sffamily 
$^a$Jordi Gaset%
\thanks{e-mail:
   jordi.gaset@upm.es \ ORCID: 0000-0001-8796-3149}\ ,\
$^b$Asier López-Gordón%
\thanks{e-mail:
   asier.lopez@icmat.es \ ORCID: 0000-0002-9620-9647}\ ,\
$^c$Xavier Rivas%
\thanks{e-mail:
   xavier.rivas@unir.net \ ORCID: 0000-0002-4175-5157}\ ,\
}
\\[1ex]
\normalsize\itshape\sffamily
$^a$Departamento de Matemática Aplicada,\\
\normalsize\itshape\sffamily
Escuela Técnica Superior de Ingeniería de Montes, Forestal y del Medio Natural\\
\normalsize\itshape\sffamily
Universidad Politécnica de Madrid, Madrid, Spain.
\\[1ex]
\normalsize\itshape\sffamily
$^b$Instituto de Ciencias Matem\'aticas (ICMAT),\\
\normalsize\itshape\sffamily
Consejo Superior de Investigaciones Cient\'ificas, Madrid, Spain.
\\[1ex]
\normalsize\itshape\sffamily
$^c$Escuela Superior de Ingenier\'{\i}a y Tecnolog\'{\i}a,\\
\normalsize\itshape\sffamily
Universidad Internacional de La Rioja, Logro\~no, Spain.
}
\date{{\sffamily \today}}
\begin{document}

\maketitle

\begin{abstract}
    In contact Hamiltonian systems, the so-called dissipated quantities are akin to conserved quantities in classical Hamiltonian systems. In this paper, we prove a Noether's theorem for non-autonomous contact Hamiltonian systems, characterizing a class of symmetries which are in bijection with dissipated quantities. We also study other classes of symmetries which preserve (up to a conformal factor) additional structures, such as the contact form or the Hamiltonian function. Furthermore, making use of the geometric structures of the extended tangent bundle, we introduce additional classes of symmetries for time-dependent contact Lagrangian systems. Our results are illustrated with several examples. In particular, we present the two-body problem with time-dependent friction, which could be interesting in celestial mechanics. 
\end{abstract}

\noindent\textbf{Keywords:} 
symmetry,
dissipation,
conserved quantity,
Noether's theorem,
contact system

\noindent\textbf{MSC\,2020 codes:}
70H33; 
37J55, 
53D10, 
53Z05 

{\setcounter{tocdepth}{2}
\def\baselinestretch{1}
\small
\def\addvspace#1{\vskip 1pt}
\parskip 0pt plus 0.1mm
\tableofcontents
}


\section{Introduction}



As it is well-known, symplectic geometry is the natural framework for classical mechanical systems. In the last decades, alternative geometric structures and their associated dynamics have been widely studied. In particular, contact geometry has arisen as a geometric solution to model non-conservative systems \cite{Bravetti2017,Bravetti2017a,deLeon2021b,Lainz,deLeon2019,Gaset2020a,Rivas2022,Gaset2021}, as well as some thermodynamical systems \cite{Simoes2020,Bravetti2019,Eberard2007,Gay-Balmaz2018,Mrugala2000}, quantum systems \cite{Ciaglia2018}, nonholonomic systems \cite{deLeon2021a}, electromagnetism \cite{gaset_application_2022}, gravitation \cite{gaset_variational_2022}, Lie systems \cite{LR-2022}, control theory \cite{deLeon2020a}, dissipative field theories \cite{Gaset2020,Gaset2021a,Rivas2022}, etc.

When a classical mechanical system exhibits explicit time dependence, i.e., it is non-autonomous, its underlying geometric structure can be taken either as a contact structure or as a cosymplectic structure \cite{deLeon2017}. Recently, the so-called cocontact geometry \cite{deLeon2022,RiTo-2022}, a suitable geometric structure describing non-autonomous dissipative systems, combining contact and cosymplectic geometry, has been introduced.

The study of symmetries of mechanical systems is of great interest since it provides a way of finding conserved (or dissipated) quantities. Moreover, reduction procedures can be used in order to simplify the description of a dynamical system whose group of symmetries is known. The relation between symmetries and conserved quantities has been a topic of great interest in mathematical physics since the seminal work by Emmy Noether \cite{Noether1971} (see also \cite{Neeman1999,Kosmann-Schwarzbach2011}).
Since the dawn of geometric mechanics, numerous papers have been devoted to the geometric study of symmetries and conserved quantities for Hamiltonian and Lagrangian systems \cite{Carinena1989,Carinena1989a,Carinena1994,deLeon1989,deLeon1994,deLeon1996,Lunev1990,Djukic1975,Ferrario1990,Marmo1986,Marwat2007,Prince1983,Prince1985,Sarlet1981,Sarlet1983,vanderSchaft1983,deLeon2021c}.
However, in the case of contact (or cocontact) systems, it is more natural to consider the so-called dissipated quantities and their associated symmetries 
\cite{Lainz,Rivas2022}.
Some notions of symmetries for autonomous contact Hamiltonian and Lagrangian systems were independently introduced in \cite{Gaset2020a} and \cite{deLeon2020b}. The study of symmetries and conserved (or dissipated) quantities is also related with Hamilton--Jacobi theory. A first Hamilton--Jacobi equation for autonomous contact systems was obtained in \cite{deLeon2017}, and an alternative one was obtained in \cite{deLeon2021e}. The Hamilton--Jacobi theory for non-autonomous contact systems has been recently done in \cite{deLeon2022a}. Canonical and canonoid transformations \cite{Az1} and Lie integrability \cite{Az2} of (co)contact systems have also been studied.

As a matter of fact, when a (co)contact Lagrangian system exhibits a cyclic coordinate, the associated quantity is no longer conserved but dissipated. In \cite{Bravetti2021} the symmetries and dissipated quantities of time-dependent contact systems were studied.
Their results are restricted to the so-called extended contact phase space, i.e., the extended cotangent bundle $\cT Q\times \R\times \R$ endowed with a contact form defined by the canonical contact form of $\cT Q\times \R$ and the Hamiltonian function of the system. Among the advantages of the cocontact formalism it is the fact that one can consider more general manifolds. Moreover, $\R \times \cT Q\times \R$ is endowed with a canonical cocontact structure, independent of the Hamiltonian function.


In the present paper, the symmetries of time-dependent contact Hamiltonian and Lagrangian systems are studied and classified. A characterization of dissipated quantities and their relation with symmetries is also provided. Firstly, the most general type of symmetries with associated dissipated quantities, the so-called generalized infinitesimal dynamical symmetries, are studied. Secondly, other types of transformations which preserve additional geometric or dynamical structures are discussed, exploring the relations between them.
 After that, we consider symmetries of time-dependent contact Lagrangian systems which also preserve the geometric structures of the extended tangent bundle. Finally, we study three examples in detail: the free particle with time-dependent mass and linear dissipation, the action-dependent central potential with time-dependent mass, and the two-body problem with time-dependent friction. {The latter may have interesting applications in celestial mechanics, allowing to describe the motion of planets with damping provoked by the medium.}

In particular, all our results can be applied to time-independent contact Hamiltonian and Lagrangian systems. We review and extend the results from the literature regarding symmetries in autonomous contact systems \cite{Gaset2020a,deLeon2020b,Gaset2021}. Hence, this paper may also be used as a reference for the reader interested in the symmetries of contact Hamiltonian and Lagrangian systems (even if they do not have an explicit time-dependence).


\bigskip
\textbf{New results and relation to literature.} This paper is, to the best of our knowledge, the first reference studying the symmetries of cocontact Hamiltonian and Lagrangian systems. Cocontact geometry was introduced in \cite{deLeon2022} in order to provide a geometric framework for action and time dependent systems, combining features of contact and cosymplectic geometry.
Furthermore, the present paper may also be used as a reference for the classification of symmetries of autonomous contact Hamiltonian and Lagrangian systems, the relations between them and their associated conserved and dissipated quantities. Several notions of symmetries that we consider had already been studied for the time-independent case in the literature:
\begin{itemize}
    \item Generalized infinitesimal dynamical symmetries were introduced in \cite{deLeon2020b}, where they were called ``dynamical symmetries''.
    \item (Infinitesimal) dynamical symmetries were introduced in \cite{Gaset2021, Gaset2020a}.
    \item (Infinitesimal) conformal Hamiltonian symmetries are called (infinitesimal) conformal symmetries in \cite{Lainz}.
    \item (Infinitesimal) strict Hamiltonian symmetries were called (infinitesimal) contact symmetries in \cite{Gaset2021} and (infinitesimal) strict symmetries in \cite{Lainz}.
    \item Cartan symmetries were introduced in \cite{deLeon2020b}.
    \item Infinitesimal generalized natural symmetries of the Lagrangian $L$ are called generalized infinitesimal symmetries of $L$ in \cite{deLeon2020b}.
    \item Infinitesimal natural symmetries of the Lagrangian $L$ are called infinitesimal symmetries of $L$ in \cite{deLeon2020b}. These symmetries were also studied in \cite{Gaset2020a}.
    \item Infinitesimal action symmetries are called action symmetries in \cite{Lainz}. This kind of transformations are employed in \cite{de_leon_inverse_2022} to generate equivalent Lagrangians.
\end{itemize}
Some relations of these symmetries with dissipated quantities were also studied in the aforementioned papers. Nevertheless, there was a lack in the literature of a systematic classification of symmetries considering the structures they preserve and the relations between them (see Figures~\ref{fig:infinitesimal_symmetries}, \ref{fig:infinitesimal_symmetries_Lagrangian} and \ref{fig:infinitesimal_symmetries_Lagrangian_Hamiltonian}).

\bigskip
\textbf{Structure of the paper.}
In Section \ref{sec_review}, the most important aspects of cocontact geometry are reviewed. Section \ref{sec_symmetrie_Hamiltonian} is devoted to the study of symmetries and dissipated quantities of time-dependent contact Hamiltonian systems. The symmetries and dissipated quantities of time-dependent contact Lagrangian systems
are discussed in Section \ref{sec_symmetries_Lagrangian}. Some examples are studied in Section \ref{sec_examples}. Finally, Section \ref{sec_conclsuions} provides some conclusions and topics for future research.

\bigskip

\textbf{Notation and conventions.}
Throughout the paper all the manifolds and mappings are assumed to be smooth, connected
and second-countable. Sum over crossed repeated indices is understood.
Given a Cartesian product of manifolds $M_1\times M_2$, the natural projections will be denoted by $\pr{1}\colon M_1\times M_2 \to M_1$ and $\pr{2}\colon M_1\times M_2 \to M_2$, and similarly for a product of $k$ manifolds $M_1\times M_2 \times \cdots \times M_k$.

\section{Review on cocontact mechanics}
\label{sec_review}

In this section the main tools of cocontact geometry are presented. This geometric framework is used to develop a geometric formulation of time-dependent contact systems both in the Hamiltonian and the Lagrangian formalisms. See \cite{deLeon2022} for details.

\subsection{Contact and Jacobi geometry}

First, let us briefly recall the basic notions of contact and Jacobi manifolds that will be employed. For more details see \cite{Geiges2008, deLeon2019, Libermann1987}.

\begin{dfn}
    A \emph{Jacobi manifold} $(M, \Lambda, E)$ is a triple where $M$ is a manifold, $\Lambda$ is a bivector field and $E$ is a vector field on $M$ such that
    \begin{equation}
        [\Lambda, E] = 0, \qquad [\Lambda, \Lambda] = 2 E \wedge \Lambda\, ,
    \end{equation}
    where $[\cdot, \cdot]$ denotes the Schouten--Nijenhuis bracket.
    The pair $(\Lambda, E)$ is called a \emph{Jacobi structure} on $M$.
The \emph{Jacobi bracket} is the map $\{\cdot, \cdot\}\colon \Cinfty(M) \times  \Cinfty(M) \to \Cinfty(M)$ given by
\begin{equation}
    \{f, g\} = \Lambda(\dd f, \dd g) + f E(g) - g E(f).
\end{equation}
\end{dfn}
This bracket is bilinear and satisfies the Jacobi identity. However, unlike Poisson brackets, in general Jacobi brackets do not satisfy the Leibniz rule.

\begin{dfn}
    A (co-oriented) \emph{contact manifold} is a pair $(M, \eta)$ where $M$ is a $(2n+1)$-manifold, and $\eta$ is a one-form on $M$ such that $\eta\wedge(\d\eta)^n$ is a volume form on $M$. The one-form $\eta$ is called a \emph{contact form} on $M$.
\end{dfn}

Given a contact manifold $(M,\eta)$, one can define an isomorphism of $\Cinfty(M)$-modules given by
$$\flat\colon \X(M)\ni X\longmapsto  \contr{X}\d\eta + (\contr{X}\eta)\eta\in \Omega^1(M)\,. $$
Every contact manifold has a unique \emph{Reeb vector field} $\Reeb$, given by $\Reeb=\flat^{-1}(\eta)$. Moreover, to each function $f\in \Cinfty(M)$ one can associate a (contact) \emph{Hamiltonian vector field} $X_f$ given by $\flat(X_f)=\d f-\left(\Reeb f+f\right)\eta$.

Additionally, given a contact manifold $(M,\eta)$, around every point $p\in M$ there exist local coordinates $(q^i, p_i, z)$ such that
$$\eta = \d z - p_i\d q^i\,, \qquad
\Reeb = \parder{}{z}\,, \qquad
X_f = \parder{f}{p_i}\parder{}{q^i} - \left(\parder{f}{q^i} + p_i\parder{f}{z}\right)\parder{}{p_i} + \left(p_i\parder{f}{p_i} - f\right)\parder{}{z}\,.
$$
These coordinates are called \emph{canonical} or \emph{Darboux coordinates}.

A \emph{contact Hamiltonian system} is a triple $(M, \eta, H)$, where $(M, \eta)$ is a contact manifold and $H\in \Cinfty(M)$ is the Hamiltonian function. Its dynamics is given by $X_H$, the Hamiltonian vector field of $H$. There is also a Lagrangian formalism for time-independent contact systems (see \cite{deLeon2019a}).

A contact manifold $(M, \eta)$ has a Jacobi structure $(\Lambda, E)$, where $E= - \Reeb$ and the bivector $\Lambda$ is given by $\Lambda(\alpha, \beta) = - \dd \eta ( \flat^{-1} (\alpha), \flat^{-1} (\beta) )$. The Jacobi bracket $\{\cdot,\cdot\}\colon\Cinfty(M) \times \Cinfty(M) \to \Cinfty(M)$ is
\begin{equation}
    \left\{f,g\right\} 
    = -\d \eta \left(\flat^{-1} \d f, \flat^{-1} \d g\right) - f \Reeb(g) + g \Reeb(f)\,.
\end{equation}

\subsection{Cocontact geometry}

\begin{dfn}
    A \emph{cocontact manifold} is a triple $(M,\tau,\eta)$ where $M$ is a $(2n+2)$-manifold, and $\tau$ and $\eta$ are one-forms on $M$ such that $\d\tau = 0$ and $\tau\wedge\eta\wedge(\d\eta)^n$ is a volume form on $M$. The pair $(\tau, \eta)$ is called a \emph{cocontact structure} on $M$.
\end{dfn}

Given an $n$-dimensional smooth manifold $Q$ with coordinates $(q^i)$ and its cotangent bundle $\cT Q$ with adapted coordinates $(q^i,p_i)$, consider the product manifolds $\R\times\cT Q$, $\cT Q\times\R$ and $\R\times\cT Q\times\R$ with adapted coordinates $(t, q^i, p_i)$, $(q^i,p_i,z)$ and $(t, q^i, p_i, z)$ respectively. The following diagram illustrates this situation and provides some canonical projections:
\begin{center}
    \begin{tikzcd}
        & \R\times\cT Q\times\R \arrow[dl, swap, "\rho_1"] \arrow[dr, "\rho_2"] \arrow[dd, "\pi"] & \\
        \R\times\cT Q \arrow[dr, swap, "\pi_2"] & & \cT Q\times\R \arrow[dl, "\pi_1"] \\
        & \cT Q &
    \end{tikzcd}
\end{center}
Denote by $\theta\in\Omega^1(\R\times\cT Q\times\R)$ the pull-back of the canonical Liouville one-form of the cotangent bundle by the projection $\pi$ given in the diagram above. Hence, $(\tau = \d t,\eta = \d z - \theta)$ is a cocontact structure on the product manifold $\R\times\cT Q\times\R$. This example, also known as \emph{canonical cocontact manifold}, is just a particular case of the following.

\begin{exmpl}\rm\label{ex:R-times-contact}
    Let $(P,\eta_0)$ be a contact manifold and consider the product manifold $M = \R\times P$. Denoting by $\d t$ the pullback to $M$ of the volume form in $\R$ and denoting by $\eta$ the pullback of $\eta_0$ to $M$, we have that $(M, \d t, \eta)$ is a cocontact manifold.
\end{exmpl}

Given a cocontact manifold $(M,\tau,\eta)$, one can define an isomorphism of $\Cinfty(M)$-modules given by
$$ \flat(X)\colon \X(M)\ni X\longmapsto (\contr{X}\tau)\tau + \contr{X}\d\eta + (\contr{X}\eta)\eta\in \Omega^1(M)\,. $$

In addition, every cocontact manifold has two distinguished vector fields $\Rt$ and $\Rz$, characterized by the conditions
$$
    \begin{cases}
        \contr{\Rt}\tau = 1\,,\\
        \contr{\Rt}\eta = 0\,,\\
        \contr{\Rt}\d\eta = 0\,,
    \end{cases}
    \qquad
    \begin{cases}
        \contr{\Rz}\tau = 0\,,\\
        \contr{\Rz}\eta = 1\,,\\
        \contr{\Rz}\d\eta = 0\,,
    \end{cases}
$$
or equivalently, $\Rt = \flat^{-1}(\tau)$ and $\Rz = \flat^{-1}(\eta)$. The vector fields $\Rt$ and $\Rz$ are called \emph{time and contact Reeb vector fields}, respectively.


A cocontact manifold $(M, \tau, \eta)$ is a Jacobi manifold $(M, \Lambda, E)$, where $E= - \Rz$ and the bivector $\Lambda$ is given by $\Lambda(\alpha, \beta) = - \dd \eta ( \flat^{-1} (\alpha), \flat^{-1} (\beta) )$. The Jacobi bracket $\{\cdot,\cdot\}\colon\Cinfty(M) \times \Cinfty(M) \to \Cinfty(M)$ is
\begin{equation}\label{eq:Jacobi_bracket_cocontact}
    \left\{f,g\right\} 
    = -\d \eta \left(\flat^{-1} \d f, \flat^{-1} \d g\right) - f \Rz(g) + g \Rz(f)\,.
\end{equation}

Moreover, given a cocontact manifold $(M,\tau,\eta)$, around every point $p\in M$ there exists a local chart $(U; t, q^i, p_i, z)$ of \emph{canonical} or \emph{Darboux coordinates} such that
$$ \restr{\tau}{U} = \d t\,, \qquad \restr{\eta}{U} = \d z - p_i\d q^i\,,\qquad \restr{\Rt}{U} = \parder{}{t}\,,\qquad\restr{\Rz}{U} = \parder{}{z}\,.$$

\subsection{Hamiltonian formalism}

\begin{dfn}
    A \emph{cocontact Hamiltonian system} is tuple $(M,\tau,\eta,H)$, where $(M,\tau,\eta)$ is a cocontact manifold and $H\in\Cinfty(M)$ is a Hamiltonian function. The \emph{cocontact Hamiltonian equations for a curve} $\psi\colon I\subset \R\to M$ are
    \begin{equation}\label{eq:Ham-eq-cocontact-sections}
            \contr{\psi'}\d \eta = \big(\d H-\Rz(H)\eta-\Rt(H)\tau\big) \circ \psi\,,\qquad
			\contr{\psi'}\eta = -H\circ \psi \,,\qquad
			\contr{\psi'}\tau = 1\,,
    \end{equation}
    where $\psi'\colon I\subset\R\to\T M$ is the canonical lift of the curve $\psi$ to the tangent bundle $\T M$. The \emph{cocontact Hamiltonian equations for a vector field} $X\in\X(M)$ are
    \begin{equation}\label{eq:Ham-eq-cocontact-vectorfields}
        \contr{X}\d \eta = \d H-\Rz(H)\eta-\Rt(H)\tau\,,\qquad
	\contr{X}\eta = -H\,,\qquad
	\contr{X}\tau = 1\,,
    \end{equation}
    which can also be written as 
    $\flat(X)=\d H-\left(\Rz H+H\right)\eta+\left(1-\Rt H\right)\tau$ or
    \begin{equation}\label{eq:Ham-eq-cocontact-vectorfields-equivalent}
        \Lie_X \eta = -\Rz(H)\eta-\Rt(H)\tau\,,\qquad
    	\contr{X}\eta = -H\,,\qquad
    	\contr{X}\tau = 1\,.
    \end{equation}
    These equations have a unique solution called the \emph{cocontact Hamiltonian vector field} $X\equiv X_H$.
\end{dfn}

Given a curve $\psi:I\subset\R\to M$ with local expression $\psi(r)=(f(r),q^i(r),p_i(r),z(r))$, the third equation in \eqref{eq:Ham-eq-cocontact-sections} imposes that $f(r)=r + c$ for some constant $c$, thus we will denote $r\equiv t$, while the other equations read
\begin{equation}\label{eq:Hamilton-cocontact}
   \begin{dcases}
		\dot q^i =\frac{\partial H}{\partial p_i}\,,
		\\
	   \dot p_i = -\left(\frac{\partial H}{\partial q^i}+p_i\frac{\partial H}{\partial z}\right)\,,
		\\
	   \dot z = p_ i\frac{\partial H}{\partial p_i}-H\,.		
	\end{dcases}	
\end{equation}
On the other hand, the local expression of the cocontact Hamiltonian vector field in Darboux coordinates is
$$ X_H = \parder{}{t} + \parder{H}{p_i}\parder{}{q^i} - \left(\parder{H}{q^i} + p_i\parder{H}{z}\right)\parder{}{p_i} + \left(p_i\parder{H}{p_i} - H\right)\parder{}{z}\,. $$
{Note that the integral curves of this vector field satisfy the system of differential equations~\eqref{eq:Hamilton-cocontact}.}

\subsection{Lagrangian formalism}\label{sec:lag}

Given a smooth $n$-dimensional manifold $Q$, consider the product manifold $\R\times\T Q\times \R$ equipped with adapted coordinates $(t, q^i, v^i, z)$. We have the canonical projections
\begin{align*}
	\tau_1&\colon \R\times\T Q\times\R\to\R\ ,&& \tau_1(t, v_q, z) = t\,,\\
	\tau_2&\colon\R\times\T Q\times\R\to\T Q\ ,&& \tau_2(t, v_q, z) = v_q\,,\\
	\tau_3&\colon\R\times\T Q\times\R\to\R\ ,&& \tau_3(t, v_q, z) = z\,,\\
	\tau_0&\colon\R\times\T Q\times\R\to \R\times Q\times\R\ ,&& \tau_0(t, v_q, z) = (t, q, z)\,, 
\end{align*}
which are summarized in the following diagram:
\begin{center}
    \begin{tikzcd}
        && \R\times\T Q\times\R \arrow[ddll, swap, "\tau_1", start anchor={south west}] \arrow[ddrr, "\tau_3", start anchor={south east}] \arrow[d, "\tau_2"] \arrow[dd, bend right=40, swap, "\tau_0"] && \\
        && \T Q  \arrow[dd, swap, bend left=80, pos=0.7, swap, "\tau_Q"] && \\
        \R && \R\times Q\times\R \arrow[ll, swap, "\mathrm{pr}_1"] \arrow[d, swap, "\mathrm{pr}_2"] \arrow[rr, crossing over, "\mathrm{pr}_3"] && \R \\
        && Q 
    \end{tikzcd}
\end{center}

The usual geometric structures of the tangent bundle can be naturally extended to the cocontact Lagrangian phase space $\R\times\T Q\times\R$. In particular, the vertical endomorphism of $\T(\T Q)$ yields a \emph{vertical endomorphism} $\mathcal{S}\colon\T(\R\times\T Q\times\R)\to\T(\R\times\T Q\times\R)$. In the same way, the Liouville vector field on the fiber bundle $\T Q$ gives a \emph{Liouville vector field} $\Delta\in\X(\R\times\T Q\times\R)$. The local expressions of these objects in Darboux coordinates are
\begin{equation}\label{eq:structures_tangent}
    {\cal S} = \frac{\partial}{\partial v^i} \otimes \d q^i \,,\quad \Delta =  v^i\, \frac{\partial}{\partial v^i} \,.
\end{equation}
$$  $$

    Given a path ${\bf c} \colon\R \rightarrow \R\times Q\times\R$ with ${\bf c} = (\mathbf{c}_1,\mathbf{c}_2,\mathbf{c}_3)$, the \emph{prolongation} of ${\bf c}$ to $\R\times\T Q\times\R$ is the path ${\bf \widetilde c} =  (\mathbf{c}_1,\mathbf{c}_2',\mathbf{c}_3) \colon \R \longrightarrow \R\times\T Q \times \R$, where $\mathbf{c}_2'$ is the velocity of~$\mathbf{c}_2$. Every path ${\bf \widetilde c}$ which is the prolongation of a path ${\bf c} \colon\R \rightarrow \R\times Q\times\R$ is called \emph{holonomic}. A vector field 
    $\Gamma \in \X(\R\times\T Q \times \R)$ satisfies the \emph{second-order condition} (it is a {\sc sode}) if all of its integral curves are holonomic.

The vector fields satisfying the second-order condition can be characterized by means of the canonical structures $\Delta$ and $\mathcal{S}$ introduced above, since $X$ is a {\sc sode} if and only if ${\cal S}( \Gamma) = \Delta$.


    A \emph{Lagrangian function} is a function $\L\in\Cinfty(\R\times\T Q\times\R)$. The \emph{Lagrangian energy} associated to $\L$ is the function $E_\L = \Delta(\L)-\L$. The \emph{Cartan forms} associated to $\L$ are
    \begin{equation}\label{eq:thetaL}
        \theta_\L = \transp{\cal S}\circ\d\L\in \Omega^1(\R\times\T Q\times\R)\,,\quad\omega_\L = -\d\theta_\L\in \Omega^2(\R\times\T Q\times\R) \,,
    \end{equation}
    where $\transp{\cal S}$ denotes the transpose operator of the vertical endomorphism.
    The \emph{contact Lagrangian form} is
    $$ \eta_\L=\d z-\theta_\L\in\Omega^1(\R\times\T Q\times\R)\,. $$
    Notice that $\d\eta_\L=\omega_\L$. The couple $(\R\times\T Q\times\R,\L)$ is a \emph{cocontact Lagrangian system}.
The local expressions of these objects are
\begin{gather*}
	E_\L = v^i\parder{\L}{v^i} - \L\,,\qquad \eta_\L = \d z - \frac{\partial\L}{\partial v^i} \,\d q^i \,,\\
	\d\eta_\L = -\parderr{\L}{t}{v^i}\d t\wedge\d q^i -\parderr{\L}{q^j}{v^i}\d q^j\wedge\d q^i -\parderr{\L}{v^j}{v^i}\d v^j\wedge\d q^i -\parderr{\L}{z}{v^i}\d z\wedge\d q^i \,.
\end{gather*}

Not all cocontact Lagrangian systems $(\R\times\T Q\times\R,\L)$ result in the tuple $(\R\times\T Q\times\R,\tau=\d t,\eta_\L, E_\L)$ being a cocontact Hamiltonian system because the condition $\tau\wedge\eta\wedge(\d\eta_\L)^n\neq 0$ is not always fulfilled. The Legendre map characterizes the Lagrangian functions that will result in cocontact Hamiltonian systems.

    Given a Lagrangian function $\L \in\Cinfty(\R\times \T Q\times\R)$, the \emph{Legendre map} associated to $\L$ is its fiber derivative \cite{Gracia2000}, considered as a function on the vector bundle $\tau_0 \colon\R\times \T Q\times\R \to \R\times Q \times \R$; that is, the map $\F\L \colon \R\times\T Q \times\R \to \R\times\cT Q \times \R$ with local expression
    $$
        \F\L (t,v_q,z) = \left( t,\F\L(t,\cdot,z) (v_q),z \right)\,,
    $$
    where $\F\L(t,\cdot,z)$ is the usual Legendre map associated to the Lagrangian $\L(t,\cdot,z)\colon\T Q\to\R$ with the variables $t$ and $z$ fixed.

The Cartan forms can also be defined as $\theta_\L={\cal F}L^{\;*}(\pi^*\theta_0)$ and $\omega_\L={\cal F}L^{\;*}(\pi^*\omega_0)$, where $\theta_0$ and $\omega_0 = -\d\theta_0$ are the canonical one- and two-forms of the cotangent bundle and $\pi$ is the natural projection $\pi\colon\R\times\cT Q\times\R\to\cT Q$.

\begin{prop}\label{prop:regular-Lagrangian}
    Given a Lagrangian function $\L$ the following statements are equivalent:
    \begin{enumerate}[{\rm(1)}]
        \item The Legendre map $\F \L$ is a local diffeomorphism.
        \item The fiber Hessian $\F^2\L \colon\R\times\T Q\times\R \longrightarrow (\R\times\cT Q\times\R)\otimes (\R\times\cT Q\times\R)$ of $\L$ is everywhere nondegenerate (the tensor product is understood to be of vector bundles over $\R\times Q \times \R$).
        \item The triple $(\R\times\T Q\times\R,\d t, \eta_\L)$ is a cocontact manifold.
    \end{enumerate}
\end{prop}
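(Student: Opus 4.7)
The plan is to prove $(1)\Leftrightarrow (2)$ and $(2)\Leftrightarrow (3)$ separately, in both cases by local coordinate computations in the adapted charts $(t,q^i,v^i,z)$ on $\R\times\T Q\times\R$ and $(t,q^i,p_i,z)$ on $\R\times\cT Q\times\R$.

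For $(1)\Leftrightarrow (2)$, I would use the local expression $\F\L(t,q^i,v^i,z) = (t,q^i,\partial\L/\partial v^i, z)$ coming from the definition recalled above. Since $\F\L$ preserves the coordinates $t$, $q^i$ and $z$, its Jacobian matrix is block triangular with diagonal blocks $1$, $I_n$, $W := (\parderr{\L}{v^i}{v^j})$ and $1$, so $\det D\F\L = \det W$. The matrix $W$ is precisely the local expression of the fiber Hessian $\F^2\L$, so the Inverse Function Theorem yields the equivalence at once.

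For $(2)\Leftrightarrow (3)$, observe first that $\d\tau=\d\d t=0$ is automatic, so the content of $(3)$ reduces to the nowhere vanishing of the $(2n{+}2)$-form $\d t\wedge\eta_\L\wedge(\d\eta_\L)^n$ on $\R\times\T Q\times\R$. Using the local expressions for $\eta_\L$ and $\d\eta_\L$ recalled just before the statement, the key observation is that the only summand of $\d\eta_\L$ containing any $\d v^k$ is $-\parderr{\L}{v^j}{v^i}\,\d v^j\wedge\d q^i$. Consequently, in any term of $\d t\wedge\eta_\L\wedge(\d\eta_\L)^n$ surviving to a top-form each of the $n$ factors of $(\d\eta_\L)^n$ must be taken to be this summand (to supply all $n$ differentials $\d v^1,\ldots,\d v^n$), and the $\d z$ must be taken from $\eta_\L$ itself: otherwise $\d z$ would have to come from the summand $-\parderr{\L}{z}{v^i}\d z\wedge\d q^i$, leaving only $n-1$ factors to contribute $\d v$'s. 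A short computation then gives
\begin{equation*}
\d t\wedge\eta_\L\wedge(\d\eta_\L)^n = \pm\, n!\,\det W\;\d t\wedge\d z\wedge\d q^1\wedge\d v^1\wedge\cdots\wedge\d q^n\wedge\d v^n,
\end{equation*}
which is nowhere zero exactly when $\det W\neq 0$.

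The main obstacle will be the combinatorial and sign bookkeeping in the top-degree part of $(\d\eta_\L)^n$. The cleanest route is the pigeonhole argument sketched above, which reduces the problem to the standard linear-algebra identity $\bigl(\sum_{i,j} W_{ij}\,\d v^j\wedge\d q^i\bigr)^n = \pm\, n!\,\det W\,\d v^1\wedge\d q^1\wedge\cdots\wedge\d v^n\wedge\d q^n$, whose verification is routine. The remaining ingredients, namely the Inverse Function Theorem and the block-determinant computation for $D\F\L$, present no difficulty.
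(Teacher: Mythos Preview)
Your argument is correct. The paper, however, states this proposition without proof: it is treated as a standard characterization of regularity in the cocontact Lagrangian setting, imported from the foundational reference \cite{deLeon2022} (and ultimately from the classical contact and symplectic cases). So there is no proof in the paper to compare against; your coordinate computation is exactly the expected direct verification. One minor remark on presentation: the Jacobian of $\F\L$ is block lower-triangular only after reordering the coordinates as $(t,q^i,z,v^i)\mapsto(t,q^i,z,p_i)$, since $\partial p_i/\partial z$ is generally nonzero; this does not affect your conclusion that $\det D\F\L=\det W$, but it is worth stating precisely.
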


A Lagrangian function $\L$ is \emph{regular} if the equivalent statements in the previous proposition hold. Otherwise $\L$ is \emph{singular}. Moreover, $\L$ is \emph{hyperregular} if $\F\L$ is a global diffeomorphism. Thus, every \emph{regular} cocontact Lagrangian system yields the cocontact Hamiltonian system $(\R\times\T Q\times\R,\d t, \eta_\L, E_\L)$.

The local expressions of the Reeb vector fields are
\begin{equation*}
    \Rt^\L = \parder{}{t} - W^{ij}\parderr{\L}{t}{v^j}\parder{}{v^i}\,,\qquad
    \Rz^\L = \parder{}{z} - W^{ij}\parderr{\L}{z}{v^j}\parder{}{v^i}\,,
\end{equation*}
where $(W^{ij})$ is the inverse of the Hessian matrix of the Lagrangian $\L$, namely $W^{ij}W_{jk}=\delta^i_k$.

If the Lagrangian $\L$ is singular, the Reeb vector fields are not uniquely determined, actually, they may not even exist \cite{deLeon2022}.

\subsubsection{The Herglotz--Euler--Lagrange equations}

\begin{dfn}\label{dfn:Euler-Lagrange-equations}
	Given a regular cocontact Lagrangian system $(\R\times\T Q\times\R,\L)$ the \emph{Herglotz--Euler--Lagrange equations} for a holonomic curve ${\bf\widetilde c}\colon I\subset\R \to\R\times\T Q\times\R$ are
	\begin{equation}\label{eq:Euler-Lagrange-curve}
		\begin{dcases}
			\contrp{{\bf\widetilde c}'}\d\eta_\L = \left(\d E_\L - {\Rt^\L}(E_\L)\d t - {\Rz^\L}(E_\L)\eta_\L\right)\circ{\bf\widetilde c}\,,\\
			\contrp{{\bf\widetilde c}'}\eta_\L = - E_\L\circ{\bf\widetilde c}\,,\\
			\contrp{{\bf\widetilde c}'}\d t = 1\,,
		\end{dcases}
	\end{equation}
	where ${\bf\widetilde c}'\colon I\subset\R\to\T(\R\times\T Q\times\R)$ is the canonical lift of ${\bf\widetilde c}$ to $\T(\R\times\T Q\times\R)$. The \emph{cocontact Lagrangian equations} for a vector field $X_\L\in\X(\R\times\T Q\times\R)$ are 
	\begin{equation}\label{eq:Euler-Lagrange-field}
		\begin{cases}
			\contr{X_\L}\d\eta_\L = \d E_\L-\Rt^\L(E_\L)\d t-\Rz^\L(E_\L)\eta_\L\,,\\
			\contr{X_\L}\eta_\L = -E_\L \,,\\
			\contr{X_\L}\d t = 1\,.
		\end{cases}
	\end{equation}
	The only vector field solution to these equations is the \emph{cocontact Lagrangian vector field}.
\end{dfn}

Equations~\eqref{eq:Euler-Lagrange-curve} and \eqref{eq:Euler-Lagrange-field} are the Lagrangian counterparts of equations \eqref{eq:Ham-eq-cocontact-sections} and \eqref{eq:Ham-eq-cocontact-vectorfields}, respectively.
    The cocontact Lagrangian vector field of a regular cocontact Lagrangian system $(\R\times\T Q\times\R,\L)$ coincides with the cocontact Hamiltonian vector field of the cocontact Hamiltonian system $(\R\times\T Q\times\R,\d t,\eta_\L,E_\L)$.


\begin{thm}\label{thm:regular-lagrangian}
    If $\L$ is a regular Lagrangian, then $X_\L\equiv\Gamma_\L$ is a {\sc sode}, called the \emph{Herglotz--Euler--Lagrange vector field} for the Lagrangian $\L$.
\end{thm}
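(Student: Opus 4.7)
The plan is to use the characterization that a vector field $\Gamma \in \mathfrak{X}(\mathbb{R}\times\mathrm{T}Q\times\mathbb{R})$ is a \textsc{sode} if and only if $\mathcal{S}(\Gamma) = \Delta$, and to verify this equality from the three cocontact Lagrangian equations \eqref{eq:Euler-Lagrange-field}. First, the equation $\contr{X_\mathcal{L}}\d t = 1$ forces $X_\mathcal{L}$ to have the local form
\begin{equation*}
    X_\mathcal{L} = \frac{\partial}{\partial t} + f^i \frac{\partial}{\partial q^i} + g^i \frac{\partial}{\partial v^i} + h \frac{\partial}{\partial z}
\end{equation*}
for some functions $f^i, g^i, h$; from the local expression \eqref{eq:structures_tangent} for $\mathcal{S}$ and $\Delta$, one has $\mathcal{S}(X_\mathcal{L}) = f^i \partial/\partial v^i$, so the sought SODE condition reduces to $f^i = v^i$.

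To establish this, I would contract the first equation of \eqref{eq:Euler-Lagrange-field} with the vertical vector field $V_k = \partial/\partial v^k$. On the right-hand side, both $\d t$ and $\eta_\mathcal{L}$ vanish on $V_k$ (since $\eta_\mathcal{L} = \d z - (\partial\mathcal{L}/\partial v^i)\d q^i$ has no $\d v^i$ component), so only $\contr{V_k}\d E_\mathcal{L} = V_k(E_\mathcal{L})$ survives; a direct computation with $E_\mathcal{L} = v^i \partial\mathcal{L}/\partial v^i - \mathcal{L}$ gives $V_k(E_\mathcal{L}) = v^i \, \partial^2\mathcal{L}/(\partial v^k \partial v^i)$. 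On the left-hand side, using the local expression of $\d\eta_\mathcal{L}$ after Definition~\ref{eq:thetaL}, one finds $\contr{V_k}\d\eta_\mathcal{L} = -(\partial^2\mathcal{L}/\partial v^k\partial v^i)\, \d q^i$, and hence the antisymmetry identity $\contr{V_k}\contr{X_\mathcal{L}}\d\eta_\mathcal{L} = -\contr{X_\mathcal{L}}\contr{V_k}\d\eta_\mathcal{L}$ yields $W_{ki}\, f^i$, where $W_{ki} := \partial^2\mathcal{L}/(\partial v^k \partial v^i)$.

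Equating the two sides gives $W_{ki}(f^i - v^i) = 0$. By Proposition~\ref{prop:regular-Lagrangian}, regularity of $\mathcal{L}$ is equivalent to the nondegeneracy of the Hessian $(W_{ij})$, so we conclude $f^i = v^i$ and therefore $\mathcal{S}(X_\mathcal{L}) = \Delta$, proving that $X_\mathcal{L} \equiv \Gamma_\mathcal{L}$ is a \textsc{sode}.

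The computation is essentially routine once the right vertical contraction is chosen; the only subtlety is the sign bookkeeping when swapping contractions on the two-form $\d\eta_\mathcal{L}$, and the crucial use of regularity to invert $(W_{ij})$ in the final step. A purely coordinate-free alternative would be to apply the transpose vertical endomorphism $\transp{\mathcal{S}}$ to both sides of the first equation of \eqref{eq:Euler-Lagrange-field} and exploit $\transp{\mathcal{S}}\eta_\mathcal{L} = 0$ and $\transp{\mathcal{S}}\d t = 0$, but the coordinate version above is more transparent and makes the role of the Hessian manifest.
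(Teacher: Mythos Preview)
Your argument is correct: contracting the first cocontact Lagrangian equation with the vertical fields $\partial/\partial v^k$ isolates the Hessian $W_{ki}$ acting on $f^i - v^i$, and regularity (Proposition~\ref{prop:regular-Lagrangian}) then forces $f^i = v^i$, i.e.\ $\mathcal{S}(X_\mathcal{L}) = \Delta$. The sign bookkeeping and the computation of $V_k(E_\mathcal{L}) = v^i W_{ki}$ are both fine.

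Note, however, that the paper states this theorem without proof; it is quoted as a standard fact from the cocontact Lagrangian formalism (cf.\ \cite{deLeon2022}) and followed immediately by the coordinate expression~\eqref{eq:Euler-Lagrange-field_local} of $\Gamma_\mathcal{L}$. So there is no proof in the paper to compare your approach against, but the argument you supply is exactly the standard one used in the analogous symplectic, cosymplectic and contact settings, and it goes through here without modification.
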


The coordinate expression of the Herglotz--Euler--Lagrange vector field is
\begin{equation}\label{eq:Euler-Lagrange-field_local}
    \Gamma_\L = \parder{}{t} + v^i\parder{}{q^i} + W^{ji}\left( \parder{\L}{q^j} - \parderr{\L}{t}{v^j} - v^k\parderr{\L}{q^k}{v^j} - \L\parderr{\L}{z}{v^j} + \parder{\L}{z}\parder{\L}{v^j} \right)\parder{}{v^i} + \L\parder{}{z}\,.
\end{equation}
An integral curve of $\Gamma_\L$ fulfills the \emph{Herglotz--Euler--Lagrange equations} for dissipative systems:
$$ \frac{\d}{\d t}\left(\parder{\L}{v^i}\right) - \parder{\L}{q^i} =\parder{\L}{z}\parder{\L}{v^i}\,,\qquad
\dot z = \L\,.
$$
These equations can also be obtained variationally from the Herglotz principle \cite{Herglotz1930} (see also \cite{deLeon2019a}).
{Roughly speaking, the variable $z$ can be interpreted as the action of the Lagrangian system.}

\section{Symmetries and dissipated quantities of cocontact Hamiltonian systems}\label{section_symmetries_Hamiltonian}
\label{sec_symmetrie_Hamiltonian}


In this section we will study the symmetries of regular time-dependent contact mechanical systems and their associated conserved and dissipated quantities. A summary of the symmetries and their relations can be found in Figure~\ref{fig:infinitesimal_symmetries}. In some cases we will restrict ourselves to the case of cocontact manifolds of the form $M = \R\times N$ where $N$ is a contact manifold (see Example \ref{ex:R-times-contact}). In this case, the natural projection $\R\times N\to\R$ defines a global canonical coordinate $t$ on the cocontact manifold $\R\times N$.

\begin{dfn}\label{dfn:cocontactomorphisms}
    Let $(M, \tau, \eta)$ be a cocontact manifold. A diffeomorphism $\Phi\colon M \to M$ is called a \emph{conformal cocontatomorphism} if $\Phi^\ast \tau = \tau$ and $\Phi^\ast \eta = f \eta$ for some non-vanishing function $f$ on $M$ called the \emph{conformal factor}.
    A \emph{(strict) cocontactomorphism} is a conformal cocontactomorphism with conformal factor $f\equiv 1$.

    An \emph{infinitesimal conformal (resp.~strict) cocontactomorphism} is a vector field $Y\in \mathfrak{X}(M)$ whose flow is a one-parameter group of conformal (resp.~strict) cocontactomorphisms.
\end{dfn}

\begin{prop}
    Let $\Phi:M\to M$ be a cocontactomorphism (i.e., $\Phi^\ast\eta = \eta$ and $\Phi^\ast \tau = \tau$), then $\Phi$ preserves the Reeb vector fields (i.e., $\Phi_\ast \Rt= \Rt$ and $\Phi_\ast \Rz= \Rz$).
\end{prop}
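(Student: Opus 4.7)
The plan is to exploit the uniqueness characterization of the two Reeb vector fields recalled earlier in the paper: $\Rt$ is the unique vector field on $M$ satisfying $\contr{\Rt}\tau=1$, $\contr{\Rt}\eta=0$, $\contr{\Rt}\d\eta=0$, and similarly $\Rz$ is the unique vector field satisfying $\contr{\Rz}\tau=0$, $\contr{\Rz}\eta=1$, $\contr{\Rz}\d\eta=0$. Hence to prove $\Phi_\ast \Rt=\Rt$ it suffices to verify that $\Phi_\ast\Rt$ satisfies the three defining conditions of $\Rt$; the analogous verification then handles $\Rz$.

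The key tool is the naturality of the interior product under diffeomorphisms: for any vector field $X\in\X(M)$ and any differential form $\alpha\in\Omega^\bullet(M)$ one has
\begin{equation}
    \contr{\Phi_\ast X}\alpha = (\Phi^{-1})^\ast\bigl(\contr{X}\Phi^\ast\alpha\bigr).
\end{equation}
Since $\Phi$ is a cocontactomorphism, $\Phi^\ast\tau=\tau$ and $\Phi^\ast\eta=\eta$; moreover $\Phi^\ast\d\eta=\d\Phi^\ast\eta=\d\eta$ because the exterior derivative commutes with pullback.

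Applying this naturality identity to $\Rt$ and evaluating on $\tau$, $\eta$, $\d\eta$ respectively, I would obtain
\begin{equation}
    \contr{\Phi_\ast\Rt}\tau=(\Phi^{-1})^\ast(\contr{\Rt}\tau)=1,\qquad \contr{\Phi_\ast\Rt}\eta=(\Phi^{-1})^\ast(\contr{\Rt}\eta)=0,\qquad \contr{\Phi_\ast\Rt}\d\eta=(\Phi^{-1})^\ast(\contr{\Rt}\d\eta)=0,
\end{equation}
using that the pullback of a constant function is the same constant, and the pullback of the zero form is zero. By uniqueness of $\Rt$, this forces $\Phi_\ast\Rt=\Rt$. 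The same three-line computation with $\Rz$ in place of $\Rt$ (now yielding $0,1,0$ on the right-hand sides) gives $\Phi_\ast\Rz=\Rz$.

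No real obstacle arises: the proof reduces to the naturality of contraction and the uniqueness built into the definition of the Reeb vector fields. The only subtlety worth emphasizing is that the condition $\Phi^\ast\tau=\tau$ is essential to preserve $\Rt$ (it is not implied by $\Phi^\ast\eta=\eta$ alone, since in a cocontact manifold the triple $(\tau,\eta,\d\eta)$ is needed to pin down both Reeb fields); in particular the same argument would break for a merely conformal cocontactomorphism, which explains why the statement is formulated for strict cocontactomorphisms.
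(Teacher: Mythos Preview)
Your proof is correct and follows essentially the same approach as the paper: both use the naturality of the interior product under diffeomorphisms together with the uniqueness characterization of the Reeb vector fields. The only cosmetic difference is that the paper verifies the three defining conditions for $\Phi_\ast^{-1}\Rt$ (using the identity $\contr{\Phi_\ast^{-1}\Rt}(\Phi^\ast\alpha)=\Phi^\ast(\contr{\Rt}\alpha)$) and then concludes $\Phi_\ast\Rt=\Rt$, whereas you verify them directly for $\Phi_\ast\Rt$ via the equivalent identity $\contr{\Phi_\ast X}\alpha=(\Phi^{-1})^\ast(\contr{X}\Phi^\ast\alpha)$.
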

\begin{proof}
   Suppose that $\Phi$ is a cocontactomorphism. We have
   \begin{align*}
		& \contrp{\Phi_\ast^{-1}\Rt}(\Phi^\ast\d\eta) = \Phi^\ast(\contr{\Rt}\d\eta) = 0\,,\\
		& \contrp{\Phi_\ast^{-1}\Rt}(\Phi^\ast\tau) = \Phi^\ast(\contr{\Rt}\tau) = 1\,,\\
		& \contrp{\Phi_\ast^{-1}\Rt}(\Phi^\ast\eta) = \Phi^\ast(\contr{\Rt}\eta) = 0\,.
	\end{align*}
	Since $\Phi^\ast\eta = \eta$ and $\Phi^\ast\tau = \tau$, by the uniqueness of the time Reeb vector field, we get that $\Phi_\ast \Rt = \Rt$. Analogously, one can see that the contact Reeb vector field is also preserved. 
\end{proof}

\begin{cor}
    If a vector field $Y\in \mathfrak{X}(M)$ is an infinitesimal cocontactomorphism (i.e., $\liedv{Y}\eta= \liedv{Y} \tau =0$), then $[Y, \Rt]=[Y, \Rz]=0$. 
\end{cor}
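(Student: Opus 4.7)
The plan is to derive this as the infinitesimal counterpart of the preceding proposition. Two equivalent routes are available, and I would probably write up the shorter flow argument, with the direct verification as an alternative.

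\emph{Flow argument.} Since $\liedv{Y}\eta = 0$ and $\liedv{Y}\tau = 0$, the (local) flow $\{\Phi_s\}$ of $Y$ consists of (strict) cocontactomorphisms. The preceding proposition then gives $(\Phi_s)_\ast \Rt = \Rt$ and $(\Phi_s)_\ast \Rz = \Rz$ for every $s$ in the domain of the flow. Differentiating the relation $(\Phi_s)^\ast \Rt = \Rt$ at $s=0$ yields $\liedv{Y}\Rt = 0$, and similarly $\liedv{Y}\Rz = 0$, which is precisely $[Y,\Rt] = [Y,\Rz] = 0$.

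\emph{Direct verification.} If one prefers to avoid flows, the conclusion also follows from the characterization of $\Rt, \Rz$ via the isomorphism $\flat$. Using the identity $\contr{[Y,Z]} = [\liedv{Y}, \contr{Z}]$ together with the fact that $\liedv{Y}$ commutes with $\dd$, I would compute
\begin{align*}
\contr{[Y,\Rt]}\tau &= \liedv{Y}(\contr{\Rt}\tau) - \contr{\Rt}(\liedv{Y}\tau) = \liedv{Y}(1) - 0 = 0, \\
\contr{[Y,\Rt]}\eta &= \liedv{Y}(\contr{\Rt}\eta) - \contr{\Rt}(\liedv{Y}\eta) = 0 - 0 = 0, \\
\contr{[Y,\Rt]}\dd\eta &= \liedv{Y}(\contr{\Rt}\dd\eta) - \contr{\Rt}\bigl(\dd \liedv{Y}\eta\bigr) = 0,
\end{align*}
so $\flat([Y,\Rt]) = 0$. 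Invertibility of $\flat$ forces $[Y,\Rt] = 0$. The computation for $\Rz$ is identical, using $\contr{\Rz}\eta = 1$, $\contr{\Rz}\tau = 0$ and $\contr{\Rz}\dd\eta = 0$.

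There is essentially no obstacle here: both approaches are routine once the preceding proposition is in hand. The only subtlety worth flagging is that in the direct approach one must test $\flat([Y,\Rt])$ against all three building blocks $\tau$, $\eta$ and $\dd\eta$ in order to invoke the $\C^\infty(M)$-module isomorphism $\flat$; this is why the three Reeb defining conditions all enter the calculation.
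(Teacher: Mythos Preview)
Your proposal is correct. The paper states this result as a corollary of the preceding proposition without giving an explicit proof, so your flow argument is precisely the intended one-line derivation: the local flow of $Y$ consists of cocontactomorphisms, these preserve $\Rt$ and $\Rz$ by the proposition, and differentiating at $s=0$ gives $[Y,\Rt]=[Y,\Rz]=0$. Your direct verification via $\flat$ is a valid alternative that avoids flows entirely; it has the minor advantage of being self-contained (not relying on the proposition) at the cost of a few more lines of computation.
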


It is worth noting that the converse is false. 

\begin{exmpl}
    Consider the cocontact manifold $(M, \tau, \eta)$ where $M=\R^4,\ \tau=\d t$ and $\eta=\d z-p\d q$, where $(t, q, p, z)$ are canonical coordinates. Clearly, the vector field $Y=\tparder{}{p}$ on $M$ preserves the Reeb vector fields $\Rt = \tparder{}{t}$ and $\Rz = \tparder{}{z}$. However, it is not an infinitesimal cocontactomorphism. Indeed,
    \begin{equation}
        \liedv{Y} \eta = \contr{Y} \d \eta = - \dd q \neq 0.
    \end{equation}
    Similarly, one can check that the map $\Phi\colon M \to M,\  (t, q, p, z)\mapsto (t, q, 2p, z)$ is a diffeomorphism preserving the Reeb vector field, but it is not a cocontactomorphism
\end{exmpl}

\subsection{Dissipated and conserved quantities of cocontact systems}

\begin{dfn}
    Let $(M,\tau,\eta,H)$ be a cocontact Hamiltonian system. A \emph{dissipated quantity} is a function $f\in\Cinfty(M)$ such that
    $$ X_H(f) = -\Rz(H) f\,. $$
\end{dfn}

Notice that, unlike in the time-independent contact case, the Hamiltonian function is not a dissipated quantity. Taking into account that
$$ X_H(H) = -\Rz(H)H + \Rt(H)\,, $$
it is clear that $H$ is a dissipated quantity if it is time-independent, namely $\Rt(H) = 0$. This resembles the cosymplectic case, where the Hamiltonian function is conserved if, and only if, it is time-independent.

\begin{prop} \label{prop_bracket_dissipated}
    Let $(M,\tau,\eta,H)$ be a cocontact Hamiltonian system. A function $f\in \Cinfty(M)$ is a dissipated quantity if and only if $\{f, H\} = \Rt(f)$, where $\{\cdot,\cdot\}$ is the Jacobi bracket associated to the cocontact structure $(\tau,\eta)$.
\end{prop}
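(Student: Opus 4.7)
My plan is to establish the intrinsic identity
\begin{equation*}
X_H(f) + \{f,H\} = \Rt(f) - \Rz(H)\,f, \qquad f\in \Cinfty(M),
\end{equation*}
from which the proposition follows at once, since $\{f,H\}=\Rt(f)$ then becomes equivalent to $X_H(f)=-\Rz(H)f$, the defining condition of a dissipated quantity.

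The key preliminary observation is that for $Y_f := \flat^{-1}(\d f)$ one has $\contr{Y_f}\tau = \Rt(f)$ and $\contr{Y_f}\eta = \Rz(f)$, obtained by contracting the defining identity $\flat(Y_f)=(\contr{Y_f}\tau)\tau+\contr{Y_f}\d\eta+(\contr{Y_f}\eta)\eta = \d f$ against $\Rt$ and $\Rz$ and using the relations $\contr{\Rt}\tau=1$, $\contr{\Rt}\eta=0$, $\contr{\Rt}\d\eta=0$, and the analogous ones for $\Rz$. The same identities hold for $Y_H:=\flat^{-1}(\d H)$.

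Next I would compute both summands on the left-hand side. For $X_H(f) = \contr{X_H}\flat(Y_f)$ I expand using the above decomposition, substitute the cocontact Hamiltonian equations \eqref{eq:Ham-eq-cocontact-vectorfields}, and re-evaluate $\contr{X_H}\contr{Y_f}\d\eta = \d\eta(Y_f,X_H) = -(\contr{X_H}\d\eta)(Y_f)$ using the first Hamiltonian equation applied to $Y_f$ together with the preliminary lemma. This produces an explicit expression for $X_H(f)$ involving $Y_f(H)$ and the symmetric Reeb cross-terms $\Rt(f)\Rt(H)$ and $\Rz(f)\Rz(H)$. In parallel, applying $\flat(Y_H)$ to $Y_f$ gives $Y_f(H)=\Rt(H)\Rt(f)+\d\eta(Y_H,Y_f)+\Rz(H)\Rz(f)$, which I solve for $\d\eta(Y_f,Y_H)$ and substitute into the Jacobi bracket formula \eqref{eq:Jacobi_bracket_cocontact}.

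Adding the two resulting expressions, the $Y_f(H)$ terms cancel, as do the symmetric cross-terms $\Rt(f)\Rt(H)$ and $\Rz(f)\Rz(H)$, leaving precisely $\Rt(f) - \Rz(H)f$. The main (and essentially only) obstacle is the sign bookkeeping coming from the antisymmetry of $\d\eta$ and the conventions fixed for $\flat$ and $\{\cdot,\cdot\}$; after the two preparatory identities are established, the rest of the argument is mechanical.
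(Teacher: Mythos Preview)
Your proposal is correct and follows essentially the same computational route as the paper: both arguments establish the identity $\{f,H\}=-X_H(f)-\Rz(H)f+\Rt(f)$ by expanding the Jacobi bracket \eqref{eq:Jacobi_bracket_cocontact} via the $\flat$-isomorphism and using the cocontact Hamiltonian equations \eqref{eq:Ham-eq-cocontact-vectorfields}. The only organisational difference is that the paper first rewrites $\flat^{-1}\d f=X_f+(\Rz(f)+f)\Rz-(1-\Rt(f))\Rt$ so that $\contr{\flat^{-1}\d f}\d\eta=\contr{X_f}\d\eta$ can be read off directly from the Hamiltonian equations for $X_f$, whereas you work with $Y_f=\flat^{-1}(\d f)$ throughout, compute $X_H(f)$ and $\{f,H\}$ separately in terms of $Y_f(H)$, and let those terms cancel upon summation.
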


\begin{proof}
    The Jacobi bracket of $f$ and $H$ is given by equation~\eqref{eq:Jacobi_bracket_cocontact}:
    \begin{equation}
    \begin{aligned}
        \left\{f,H\right\} 
        = -\d \eta \left(\flat^{-1} \d f, \flat^{-1} \d H\right) - f \Rz(H) + H \Rz(f),
    \end{aligned}
    \end{equation}
    but
    \begin{equation}
        \flat^{-1} \d f = X_f + \left(\Rz(f) + f\right) \Rz - \left(1 - \Rt(f) \right) \Rt,
    \end{equation}
    so, taking into account equations \eqref{eq:Ham-eq-cocontact-vectorfields},
    \begin{equation}
        \contr{\flat^{-1} \d f} \d \eta = \contr{X_f} \d \eta
        = \d f - \Rz(f) \eta -\Rt(f) \tau\, ,
    \end{equation}
    and thus
    \begin{equation}
        \d \eta (\flat^{-1} \d f, \flat^{-1} \d H) 
        = X_H(f)  + \Rs(f) H - \Rt(f)
        \,.
    \end{equation}
    Hence,
    \begin{equation}
        \{f, H\} = -X_H(f) - \Rz (H) f + \Rt(f)\, ,
    \end{equation}
    so
    \begin{equation}
         \left\{H,f\right\} + \Rt(f) = X_H(f)  + \Rs(H) f
         \, .
    \end{equation}
    In particular, the right-hand side vanishes if and only if $f$ is a dissipated quantity.
\end{proof}

The symmetries that we shall present yield dissipated quantities. However, we are also interested in finding conserved quantities.

\begin{dfn}
    A \emph{conserved quantity} of a cocontact Hamiltonian system $(M,\tau,\eta,H)$ is a function $g\in\Cinfty(M)$ such that
    $$ X_H (g) = 0\,. $$
\end{dfn}

Taking into account that every dissipated quantity changes with the same rate $\Rz(H)$, we have the following result, whose proof is straightforward.

\begin{prop}
    Consider a cocontact Hamiltonian system $(M,\tau,\eta,H)$. Then
    \begin{enumerate}[{\rm(1)}]
        \item if $f_1$ and $f_2$ are dissipated quantities and $f_2\neq 0$, then $f_1/f_2$ is a conserved quantity,
        \item if $f$ is a dissipated quantity and $g$ is a conserved quantity, then $fg$ is a dissipated quantity,
        \item if $f_1$ and $f_2$ are dissipated quantities, $a_1 f_1 + a_2 f_2$ is also a dissipated quantity for any $a_1, a_2 \in \R$,
        \item if $g_1$ and $g_2$ are conserved quantities, $a_1 g_1 + a_2 g_2 + a_3$ is also a conserved quantity for any $a_1, a_2, a_3\in \R$.
    \end{enumerate}
\end{prop}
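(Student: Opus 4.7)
The plan is to verify each of the four assertions by a direct computation based on the defining identities
\begin{equation}
X_H(f) = -\Rz(H)\,f \quad \text{(dissipated)}, \qquad X_H(g) = 0 \quad \text{(conserved)},
\end{equation}
exploiting that $X_H$ is a derivation on $\Cinfty(M)$ and hence satisfies the Leibniz rule. No further geometric input beyond Definitions of dissipated/conserved quantity is required, so I would not invoke the Jacobi bracket characterization of Proposition~\ref{prop_bracket_dissipated}.

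For (1), I would apply the quotient rule to $f_1/f_2$ at points where $f_2\neq 0$ (which is a non-empty open subset, and by connectedness and non-vanishing the quotient is globally defined on that set). One obtains
\begin{equation}
X_H\!\left(\frac{f_1}{f_2}\right) = \frac{X_H(f_1)\,f_2 - f_1\,X_H(f_2)}{f_2^{\,2}} = \frac{-\Rz(H)f_1 f_2 + f_1\Rz(H)f_2}{f_2^{\,2}} = 0,
\end{equation}
so $f_1/f_2$ is conserved. For (2), Leibniz gives $X_H(fg) = X_H(f)\,g + f\,X_H(g) = -\Rz(H)\,f\,g + 0$, which is exactly the dissipation equation for $fg$. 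For (3) and (4), I would simply invoke $\R$-linearity of $X_H$ together with $X_H(a_3)=0$ for the additive constant in (4); then the equations $X_H(a_1f_1+a_2f_2) = -\Rz(H)(a_1f_1+a_2f_2)$ and $X_H(a_1g_1+a_2g_2+a_3)=0$ follow at once.

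There is no real obstacle here: the statement is essentially the observation that dissipated quantities form a module over the ring of conserved quantities, and the whole proof rests on the derivation property of $X_H$. The only point that might deserve a brief comment is item (1), where one must either restrict to the open set $\{f_2\neq 0\}$ or, as stated in the proposition, assume $f_2$ is nowhere vanishing so that $f_1/f_2$ is a bona fide element of $\Cinfty(M)$.
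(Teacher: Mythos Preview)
Your proof is correct and is exactly the straightforward verification the paper has in mind; indeed, the paper does not spell out a proof at all, merely remarking that the result is immediate from the fact that all dissipated quantities evolve with the same rate $-\Rz(H)$. Your Leibniz-rule computations for (1)--(4) are precisely what is intended.
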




\subsection{Generalized infinitesimal dynamical symmetries}
The following result motivates the definition of the most general type of symmetries with associated dissipated quantities.


\begin{thm}[Noether's theorem]\label{Noether_thm}
    Consider the cocontact Hamiltonian system $(M,\tau,\eta,H)$. Let $Y\in \mathfrak{X}(M)$.
    If $\eta([Y,X_H]) = 0$ and $\contr{Y}\tau = 0$, then $f = -\contr{Y}\eta$ is a dissipated quantity. Conversely, given a dissipated quantity $f$, the vector field $Y = X_f - \Rt$, where $X_f$ is the Hamiltonian vector field associated to $f$, 
    verifies $\eta([Y,X_H]) = 0$, $\contr{Y}\tau = 0$ and $f = -\contr{Y}\eta$.
\end{thm}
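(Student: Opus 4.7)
My plan is to derive both implications of Noether's theorem from a single master identity obtained by applying the derivation rule $\Lie_{X_H}\circ\contr{Y} - \contr{Y}\circ\Lie_{X_H} = \contr{[X_H,Y]}$ to the contact form $\eta$.

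First, for an arbitrary $Y\in\vf(M)$ I would set $f = -\contr{Y}\eta$ and compute
\[
-X_H(f) \;=\; \Lie_{X_H}(\contr{Y}\eta) \;=\; \contr{[X_H,Y]}\eta + \contr{Y}\,\Lie_{X_H}\eta\,.
\]
Substituting the cocontact Hamilton equation $\Lie_{X_H}\eta = -\Rz(H)\eta - \Rt(H)\tau$ from \eqref{eq:Ham-eq-cocontact-vectorfields-equivalent} and rearranging produces the master identity
\[
X_H(f) + \Rz(H)\,f \;=\; \contr{[Y,X_H]}\eta + \Rt(H)\,\contr{Y}\tau\,,
\]
valid for every vector field $Y$ under the convention $f = -\contr{Y}\eta$.

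The forward implication is then just a reading of this identity: the two hypotheses $\contr{Y}\tau = 0$ and $\eta([Y,X_H]) = 0$ annihilate the right-hand side, leaving $X_H(f) = -\Rz(H)\,f$, which is the definition of $f$ being a dissipated quantity.

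For the converse, starting from a dissipated $f$ I would set $Y = X_f - \Rt$. Using $\contr{X_f}\tau = 1 = \contr{\Rt}\tau$, $\contr{X_f}\eta = -f$, and $\contr{\Rt}\eta = 0$ from \eqref{eq:Ham-eq-cocontact-vectorfields} and the defining equations of the time Reeb vector field, a one-line subtraction yields $\contr{Y}\tau = 0$ and $\contr{Y}\eta = -f$, so $Y$ already satisfies the convention and the other non-trivial hypothesis of the forward direction. Feeding $\contr{Y}\tau = 0$ together with $X_H(f) = -\Rz(H)\,f$ into the master identity forces its left-hand side to vanish, collapsing the right-hand side to $\contr{[Y,X_H]}\eta = 0$. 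The only real obstacle in the plan is the sign-bookkeeping when expanding $\Lie_{X_H}(\contr{Y}\eta)$ and keeping track of the minus in $f = -\contr{Y}\eta$; once the master identity is clean, both halves of the theorem follow from it without any further input beyond the cocontact Hamilton equations.
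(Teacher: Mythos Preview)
Your proof is correct and uses essentially the same ingredients as the paper's own argument: the derivation identity $\Lie_{X_H}\contr{Y}\eta = \contr{[X_H,Y]}\eta + \contr{Y}\Lie_{X_H}\eta$ combined with the Hamilton equation $\Lie_{X_H}\eta = -\Rz(H)\eta - \Rt(H)\tau$. The only difference is organizational: you distill both directions into a single ``master identity'' $X_H(f) + \Rz(H)f = \contr{[Y,X_H]}\eta + \Rt(H)\,\contr{Y}\tau$, whereas the paper carries out the forward and converse computations separately---but the underlying calculation is identical.
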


\begin{proof}
    Let $f = -\contr{Y}\eta$, where $Y$ satisfies $\eta([Y,X_H]) = 0$ and $\contr{Y}\tau = 0$.
     Then,
    \begin{align*}
        \Lie_{X_H}f &= -\Lie_{X_H}\contr{Y}\eta
        = -\contr{Y}\Lie_{X_H}\eta - \contr{[X_H,Y]}\eta
        = \contr{Y}\left( \Rz(H)\eta + \Rt(H)\tau \right)
        \\&
        = \Rz(H)\contr{Y}\eta
        = -\Rz(H) f\,,
    \end{align*}
    and thus $f$ is a dissipated quantity.

    On the other hand, given a dissipated quantity $f$, let $Y = X_f - \Rt$. Then, it is clear that $f = -\contr{Y}\eta$. In addition, $\contr{Y}\tau = 0$, and
    \begin{align}
        \contr{[X_H,Y]}\eta 
        &= \liedv{X_H} \contr{Y} \eta -  \contr{Y} \liedv{X_H} \eta
        = -\liedv{X_H} f  + \contr{Y} \left(\Rz(H) \eta +\Rt(H) \tau\right)
        \\
        &= \Rz(H) f - \Rz(H) \contr{Y} \eta = 0\,,
    \end{align}
    where we have used equations \eqref{eq:Ham-eq-cocontact-vectorfields-equivalent}.
\end{proof}

This result motivates the following definition.

\begin{dfn}
    Let $(M,\tau,\eta,H)$ be a cocontact Hamiltonian system and let $X_H$ be its cocontact Hamiltonian vector field. A \emph{generalized infinitesimal dynamical symmetry} is a vector field $Y\in\X(M)$ such that $\eta([Y,X_H]) = 0$ and $\contr{Y}\tau = 0$.
\end{dfn}

In particular, if $H$ is a time-independent Hamiltonian function, then $H$ is a dissipated quantity and its associated generalized infinitesimal dynamical symmetry is the Hamiltonian vector field $X_H$.

{Theorems 3 and 4 of \cite{Bravetti2021} are the analogous of Theorem \ref{Noether_thm} in the extended contact phase space (instead of the cocontact) formalism.}

\begin{rmrk}
Despite the condition $\tau(Y)=0$, the dissipated quantity associated to a generalized infinitesimal dynamical symmetry $Y$ may be time-dependent. Indeed,
\begin{equation}
    \liedv{\Rt} f = - \liedv{\Rt} \contr{Y} \eta = -\contr{[\Rt, Y]} \eta - \contr{Y} \liedv{\Rt} \eta
    = -\eta\big([\Rt, Y] \big)
    = - \parder{Y^z}{t} + p_i\parder{Y^{q^i}}{t} ,
\end{equation}
where $Y=Y^{q^i}\tparder{}{q^i}+Y^{v^i}\tparder{}{v^i}+Y^{z}\tparder{}{z}$.

\end{rmrk}




\subsection{Other symmetries}
We are now interested in other types of symmetries which preserve more properties of the system, such as the dynamical vector field or the Hamiltonian function.

\begin{dfn}
    Let $(M,\tau,\eta,H)$ be a cocontact Hamiltonian system and let $X_H$ be its cocontact Hamiltonian vector field.
    \begin{enumerate}[{\rm(1)}]
        \item If $M = \R\times N$ with $N$ a contact manifold, a \emph{dynamical symmetry} is a diffeomorphism $\Phi\colon M\to M$ such that $\Phi_\ast X_H = X_H$ and $\Phi^\ast t = t$.
        \item An \emph{infinitesimal dynamical symmetry} is a vector field $Y\in\X(M)$ such that $\Lie_Y X_H = [Y,X_H] = 0$ and $\contr{Y}\tau = 0$. In particular, if $M = \R\times N$, the flow of $Y$ is made of dynamical symmetries.
    \end{enumerate}
\end{dfn}

Generalized infinitesimal dynamical symmetries receive that name since they satisfy weaker conditions than infinitesimal dynamical symmetries. It is clear that every infinitesimal dynamical symmetry is a generalized infinitesimal dynamical symmetry. We also define a generalization of dynamical symmetries as follows:

\begin{dfn}\label{dfn:gendynsym}
    Let $(M,\tau,\eta,H)$ be a cocontact Hamiltonian system, where $M = \R\times N$ with $N$ a contact manifold, and let $X_H$ be its cocontact Hamiltonian vector field. A \emph{generalized dynamical symmetry} is a diffeomorphism $\Phi\colon M\to M$ such that $\eta(\Phi_\ast X_H) = \eta(X_H)$ and $\Phi^\ast  t = t$.
\end{dfn}

Unlike other symmetries with infinitesimal counterparts, the flow of a generalized infinitesimal dynamical symmetry is not necessarily made of generalized dynamical symmetries.

\begin{exmpl}\label{ex:example}
    Consider the cocontact Hamiltonian system $(\R^4\setminus\{0\}, \tau, \eta, H)$, with $\tau= \dd t,\ \eta = \dd z - p \dd x$ and 
    $$H=\frac{p^2}{2} + z\, ,$$
    where $(t, x, p, z)$ are the canonical coordinates in $\R^4$. 
    The family of diffeomorphisms
    \begin{equation}
    \begin{aligned}
        \Phi^r\colon \R^4\setminus\{0\} &\to \R^4\setminus\{0\}\\
        \left( t, x, p, z\right ) & \mapsto \left( t, x, p+r, z \right)
    \end{aligned}
    \end{equation}
    for $r\in\mathbb{R}$, is generated by the vector field $Y =\parder{}{p}$. 
    One can check that $Y$
    is a generalized infinitesimal dynamical symmetry, but $\Phi^r$ is not a generalized dynamical symmetry for $r\neq 0$. Indeed, for
    \begin{equation}
        X_H = \parder{}{t} + p \parder{}{x} - p \parder{}{p} + \left( \frac{p^2}{2} - z\right) \parder{}{z}\,,
    \end{equation}
    we have
    \begin{equation}
        \Phi^r_*X_H = \parder{}{t} + (p-r) \parder{}{x} - (p-r) \parder{}{p} + \left( \frac{(p-r)^2}{2} - z\right) \parder{}{z}\neq X_H\,,
    \end{equation}
    and $\eta(\Phi^r_*X_H)\neq\eta(X_H)$. 
\end{exmpl}

The (infinitesimal) dynamical symmetries defined above are the counterparts of (infinitesimal) dynamical symmetries in symplectic Hamiltonian systems (see \cite{deLeon1989,Roman-Roy2020} and references therein).
They are of interest since they map trajectories of the system onto other trajectories. As a matter of fact, if $\sigma\colon \R \to M$ is an integral curve of $X_H$ and $\Phi$ is a dynamical symmetry, then $\Phi \circ \sigma$ is also an integral curve of $X_H$.
In addition, we have the following result.

\begin{prop}
    Infinitesimal dynamical symmetries close a Lie subalgebra of $(\mathfrak{X}(M), [\cdot, \cdot])$. In other words, given two infinitesimal dynamical symmetries $Y_1,Y_2\in\X(M)$, its Lie bracket $[Y_1,Y_2]$ is also an infinitesimal dynamical symmetry.

    Moreover, dynamical symmetries form a Lie subgroup of $\Diff(M)$, that is, for any pair of dynamical symmetries $\Phi_1$ and $\Phi_2$, the composition $\Phi_1 \circ \Phi_2$ is also a dynamical symmetry.
\end{prop}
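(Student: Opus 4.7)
The plan is to verify both claims by direct computation, relying respectively on the Jacobi identity for the Lie algebra part and on the functoriality of pushforward/pullback for the Lie group part.

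For the Lie algebra claim, I need to check that if $Y_1, Y_2$ satisfy $[Y_i, X_H]=0$ and $\iota_{Y_i}\tau = 0$, then $[Y_1, Y_2]$ satisfies the same two conditions. The first follows from the Jacobi identity: writing
\begin{equation}
    [[Y_1, Y_2], X_H] + [[Y_2, X_H], Y_1] + [[X_H, Y_1], Y_2] = 0,
\end{equation}
the last two brackets vanish because $[Y_i, X_H] = 0$, hence $[[Y_1, Y_2], X_H] = 0$. For the second condition, I would use the identity $\iota_{[Y_1, Y_2]} = \liedv{Y_1} \iota_{Y_2} - \iota_{Y_2} \liedv{Y_1}$ together with Cartan's magic formula:
\begin{equation}
    \iota_{[Y_1, Y_2]} \tau = \liedv{Y_1}\iota_{Y_2}\tau - \iota_{Y_2}\bigl(\d \iota_{Y_1}\tau + \iota_{Y_1} \d\tau\bigr).
\end{equation}
The first term vanishes since $\iota_{Y_2}\tau = 0$; in the second term, $\d\iota_{Y_1}\tau = 0$ because $\iota_{Y_1}\tau = 0$, and $\iota_{Y_1}\d\tau = 0$ because $\d\tau = 0$ by the cocontact manifold axiom. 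Thus $\iota_{[Y_1,Y_2]}\tau = 0$, finishing the first statement.

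For the Lie subgroup claim, I would check closure under composition and inversion, which together with associativity give the subgroup structure. For the composition, using functoriality of the pushforward and pullback,
\begin{equation}
    (\Phi_1 \circ \Phi_2)_\ast X_H = \Phi_{1\ast}(\Phi_{2\ast} X_H) = \Phi_{1\ast} X_H = X_H, \qquad (\Phi_1 \circ \Phi_2)^\ast t = \Phi_2^\ast(\Phi_1^\ast t) = \Phi_2^\ast t = t.
\end{equation}
For inversion, applying $(\Phi^{-1})_\ast$ to the identity $\Phi_\ast X_H = X_H$ yields $(\Phi^{-1})_\ast X_H = X_H$, and analogously $(\Phi^{-1})^\ast t = t$. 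Hence the set of dynamical symmetries is closed under the group operations of $\Diff(M)$.

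The only subtle point is the observation that $\d\tau = 0$ is needed in the Lie algebra calculation, which is why the conclusion relies on the cocontact structure and not merely on $\tau$ being a one-form. No substantial obstacle is expected: both statements reduce to elementary manipulations once the defining conditions of infinitesimal/finite dynamical symmetries are unpacked.
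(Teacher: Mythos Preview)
Your proof is correct and follows essentially the same approach as the paper: the Jacobi identity for $[[Y_1,Y_2],X_H]=0$, the identity $\iota_{[Y_1,Y_2]}=\Lie_{Y_1}\iota_{Y_2}-\iota_{Y_2}\Lie_{Y_1}$ together with Cartan's formula (using $\d\tau=0$) for the second condition, and functoriality of pushforward/pullback for the group part. Your explicit remark that $\d\tau=0$ is needed matches the paper's computation exactly.
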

\begin{proof}
    Using the Jacobi identity,
    $$ [[Y_1,Y_2],X_H] = [Y_2,[X_H,Y_1]] + [Y_1,[Y_2,X_H]] = 0\,. $$
    In addition,
    $$ \contr{[Y_1,Y_2]}\tau = \Lie_{Y_1}\contr{Y_2}\tau - \contr{Y_2}\Lie_{Y_1}\tau = -\contr{Y_2}\left( \contr{Y_1}\d\tau + \d \contr{Y_1}\tau \right) = 0\,. $$

    On the other hand, if $\Phi_1$ and $\Phi_2$ are dynamical symmetries, then 
    $$(\Phi_1\circ \Phi_2)_\ast X_H = (\Phi_1)_\ast (\Phi_2)_\ast X_H 
    = (\Phi_1)_\ast X_H = X_H\, ,$$ 
    and $(\Phi_1\circ \Phi_2)^\ast t = \Phi_2^\ast \Phi_1^\ast t = \Phi_2^\ast t = t$. Obviously, $\Phi\equiv \id$ is a dynamical symmetry. Finally, if $\Phi$ is a dynamical symmetry, then
    $$X_H = (\Phi^{-1}\circ \Phi)_\ast X_H = \Phi^{-1}_\ast \Phi_\ast X_H = \Phi^{-1}_\ast X_H\, ,$$
    and similarly $(\Phi^{-1})^\ast t = t$. This proves that dynamical symmetries form a group under composition.
\end{proof}

Generalized infinitesimal dynamical symmetries do not close a Lie algebra, as the counterexample below shows. 

\begin{exmpl}
    Consider the cocontact Hamiltonian system from Example \ref{ex:example}. 
    Given the vector fields
    $$ Y=\frac{\partial}{\partial p}\qquad \text{and}\qquad Z=\frac{x}{2}\frac{\partial}{\partial x}+\frac{p}{2}\frac{\partial}{\partial p}+(z+p)\frac{\partial}{\partial z}\,,$$
    one can check that $Y$ is a generalized infinitesimal dynamical symmetry and $Z$ is an infinitesimal dynamical symmetry. Nevertheless,
    $$ [Y,Z]=\frac12\frac{\partial}{\partial p}+\frac{\partial}{\partial z}$$
    is not a generalized infinitesimal symmetry.
\end{exmpl}

A natural type of objects that conserve the geometry of the system are the (infinitesimal) $f$-conformal cocontactomorphisms (see Definition \ref{dfn:cocontactomorphisms}). Since the function $H$ is independent of the cocontact structure $(\tau, \eta)$, in general $f$-conformal cocontactomorphisms are not generalized dynamical symmetries. The necessary and sufficient condition is shown in the next result.

\begin{prop}
Let $(M,\tau,\eta,H)$ be a cocontact Hamiltonian system.
\begin{enumerate}[{\rm(1)}]
    \item Let $\Phi:M\rightarrow M$ be an $f$-conformal cocontactomorphism of the cocontact
    manifold $(M,\tau,\eta)$, namely $\Phi^*\eta=f\eta$ and $\Phi^*\tau=\tau$. Then, $\eta(\Phi_*X_H)=\eta(X_H)$ if, and only if, $\Phi^*H=fH$. Moreover, for a cocontact Hamiltonian system of the form presented in Definition \ref{dfn:gendynsym}, $\Phi$ is a generalized dynamical symmetry if, and only if, $\Phi^*H=fH$ and $\Phi^*t=t$.
    \item Let $Y\in\mathfrak{X}(M)$ be an infinitesimal $g$-conformal cocontactomorphism of the cocontact manifold $(M,\tau,\eta)$, namely $\Lie_{Y}\eta=g\eta$ and $\Lie_{Y}\tau=0$. Then, $\eta([Y,X_H])=0$ if, and only if, $\Lie_{Y}H=gH$. In particular, $Y$ is a generalized infinitesimal dynamical symmetry if, and only if, $\Lie_{Y}H=gH$ and $\contr{Y}\tau=0$.
\end{enumerate}
\end{prop}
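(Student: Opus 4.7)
The plan is to reduce both statements to the single Hamilton identity $\iota_{X_H}\eta = -H$ from \eqref{eq:Ham-eq-cocontact-vectorfields}, which is the only way the Hamiltonian $H$ enters the cocontact dynamics through the one-form $\eta$. All other terms in the characterization of $X_H$ involve $\tau$ or $\d\eta$, and under the conformal hypothesis $\Phi^{*}\tau=\tau$ (resp.\ $\mathcal{L}_Y\tau=0$) the $\tau$-component will be automatically controlled.

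For part~(1), the natural move is to rewrite $\eta(\Phi_{*}X_H)$ as a pullback quantity: for any one-form $\alpha$ and diffeomorphism $\Phi$ one has $\alpha(\Phi_{*}X) = \bigl((\Phi^{-1})^{*}\,\iota_{X}\Phi^{*}\alpha\bigr)$. Applying this with $\alpha=\eta$, $X=X_H$, and using $\Phi^{*}\eta=f\eta$ together with $\iota_{X_H}\eta=-H$, I get
\begin{equation}
\eta(\Phi_{*}X_H) \;=\; (\Phi^{-1})^{*}\bigl(f\,\iota_{X_H}\eta\bigr) \;=\; -(\Phi^{-1})^{*}(fH).
\end{equation}
Since $\eta(X_H)=-H$, the equality $\eta(\Phi_{*}X_H)=\eta(X_H)$ is equivalent to $(\Phi^{-1})^{*}(fH)=H$, i.e.\ $\Phi^{*}H=fH$, as desired. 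The ``in particular'' part for generalized dynamical symmetries is then just the addition of the time-preserving condition $\Phi^{*}t=t$, which is precisely the $\tau$-preservation packaged in Definition~\ref{dfn:gendynsym}.

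For part~(2), I would use the Cartan-type identity for the contraction of a Lie bracket with a one-form,
\begin{equation}
\iota_{[Y,X_H]}\eta \;=\; \mathcal{L}_Y\bigl(\iota_{X_H}\eta\bigr) \;-\; \iota_{X_H}\bigl(\mathcal{L}_Y\eta\bigr).
\end{equation}
Substituting $\iota_{X_H}\eta=-H$ and the hypothesis $\mathcal{L}_Y\eta=g\eta$ gives
\begin{equation}
\eta\bigl([Y,X_H]\bigr) \;=\; -\mathcal{L}_Y H \;-\; g\,\iota_{X_H}\eta \;=\; -\mathcal{L}_Y H \;+\; gH,
\end{equation}
so $\eta([Y,X_H])=0$ is equivalent to $\mathcal{L}_Y H=gH$. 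Adding the hypothesis $\iota_Y\tau=0$ then yields exactly the definition of a generalized infinitesimal dynamical symmetry. Alternatively, part~(2) can be obtained from part~(1) by differentiating the one-parameter family of conformal cocontactomorphisms generated by $Y$ at $s=0$, with $g=\tfrac{d}{ds}\big|_{0}f_s$.

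The computations are essentially routine once the pullback bookkeeping is set up, so I do not expect a genuine obstacle. The one place requiring care is the direction of the pullback in part~(1): the conformal factor $f$ lives on the source of $\Phi$, while $\eta(\Phi_{*}X_H)$ lives on the target, and one must transport $fH$ by $(\Phi^{-1})^{*}$ before comparing with $H$; getting this right is what turns the identity $\Phi^{*}\eta=f\eta$ into the clean condition $\Phi^{*}H=fH$.
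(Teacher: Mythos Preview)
Your proposal is correct and follows essentially the same route as the paper: both arguments hinge solely on the identity $\iota_{X_H}\eta=-H$ together with the naturality of pullback (resp.\ Lie derivative) with respect to the interior product, and the conformal hypothesis $\Phi^{*}\eta=f\eta$ (resp.\ $\mathcal{L}_Y\eta=g\eta$). Your handling of part~(1) is in fact a bit more careful than the paper's about which side of $\Phi$ the conformal factor lives on; the paper computes $\Phi^{*}H$ directly and compares, while you compute $\eta(\Phi_{*}X_H)$ directly and compare, but the content is identical.
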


\begin{proof} If $X_H$ is the solution of the cocontact Hamiltonian system $(M,\tau,\eta,H)$, we have that $\contr{X_H}\eta=-H$, so
$$
    \Phi^*H=-\Phi^*(\contr{X_H}\eta)=- \contr{\Phi_*X_H}\Phi^*\eta=-f\contr{\Phi_*X_H}\eta \,.
$$
If $\Phi$ is a generalized dynamical symmetry, then $\contr{\Phi_*X_H}\eta=\contr{X_H}\eta$, and therefore $\Phi^*H=fH$. Conversely, if $\Phi^*H=fH$, then
$$
f\contr{\Phi_*X_H}\eta=-\Phi^*H=-fH=f\contr{X_H}\eta\,.
$$
Since $f\neq0$ everywhere, we conclude that $\contr{\Phi_*X_H}\eta=\contr{X_H}\eta$.

The infinitesimal case is proved with a similar argument using the relation
$$
\Lie_{Y}H=-\Lie_{Y}(\contr{X_H}\eta)=- \contr{[Y,X_H]}\eta-\contr{X_H}\Lie_{Y}\eta=- \contr{[Y,X_H]}\eta-g\contr{X_H}\eta=- \contr{[Y,X_H]}\eta+gH \,.
$$
\end{proof}

This result justifies the following definition.
\begin{dfn}
    Let $(M,\tau,\eta,H)$ be a cocontact Hamiltonian system.
    \begin{enumerate}[{\rm(1)}]
        \item A \emph{$f$-conformal Hamiltonian symmetry} is a diffeomorphism $\Phi:M \to M$ such that
        $$ \Phi^*t = t \ ,\qquad \Phi^\ast\eta = f\eta\ ,\qquad \Phi^\ast H 
        = fH\,,
        $$
        where $f\in\Cinfty(M)$ does not vanish anywhere, $M = \R\times N$ with $(N,\eta)$ a contact manifold, and $t$ is the canonical coordinate of $\R$.  If $\Phi$ is a cocontactomorphism (i.e., if $f \equiv 1$), we say that $\Phi$ is a \emph{strict Hamiltonian symmetry}.
        \item An \emph{infinitesimal $\rho$-conformal Hamiltonian symmetry} is a vector field $Y\in\X(M)$ such that
        $$ \contr{Y}\tau = 0\ ,\qquad \Lie_Y\eta = \rho\eta\ ,\qquad \Lie_Y H 
        = \rho H\,,
        $$
        where $\rho \in\Cinfty(M)$. In particular, if $M = \R\times N$, the flow of $Y$ is made of conformal Hamiltonian symmetries. If $Y$ is an infinitesimal cocontactomorphism (i.e., if $\rho \equiv 0$), $Y$ is said to be an \emph{infinitesimal strict Hamiltonian symmetry}.
    \end{enumerate}
\end{dfn}

These symmetries correspond, in time-independent contact systems, to  ``contact symmetries'' (see \cite{Gaset2020a}). The symplectic counterparts of (infinitesimal) strict Hamiltonian symmetries are sometimes referred to as ``(infinitesimal) Noether symmetries'' (see \cite{Roman-Roy2020} and references therein).


If a conserved quantity is known, (infinitesimal) dynamical symmetries can be used to compute additional conserved quantities. Similarly, if a dissipated quantity is known, (infinitesimal) strict Hamiltonian symmetries can be used to compute new dissipated quantities.

\begin{prop}\label{prop:new_quantities}
Suppose that $g\in\Cinfty(M)$ is a conserved quantity and $f\in\Cinfty(M)$ is a dissipated quantity.
\begin{enumerate}[{\rm(1)}]
    \item If $\Phi\colon M\to M$ is a strict Hamiltonian symmetry and a dynamical symmetry, then $\widehat{f}=f\circ \Phi=\Phi^\ast f$ is also a dissipated quantity.
    \item If $Y \in \mathfrak{X}(M)$ is an infinitesimal strict Hamiltonian symmetry and an infinitesimal dynamical symmetry, then $\widetilde{f}=\liedv{Y} f$ is also a dissipated quantity.
    \item If $\Phi\colon M\to M$ is a dynamical symmetry, then $\widehat{g}=g\circ \Phi=\Phi^\ast g$ is also a conserved quantity.
    \item If $Y\in\X(M)$ is an infinitesimal dynamical symmetry, then $\widetilde{g} = \Lie_Y g$ is also a conserved quantity.
\end{enumerate}
\end{prop}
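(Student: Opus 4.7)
The plan is to derive all four statements from two algebraic identities: if $\Phi_*X_H=X_H$ then $X_H\circ\Phi^*=\Phi^*\circ X_H$ on $\Cinfty(M)$, and if $[Y,X_H]=0$ then $X_H\circ\Lie_Y=\Lie_Y\circ X_H$ on $\Cinfty(M)$. The first follows from the chain rule applied to $\Phi^*h=h\circ\Phi$ once one invokes $\Phi_*X_H=X_H$, while the second is nothing but the bracket identity $[Y,X_H]h=Y(X_Hh)-X_H(Yh)$ rewritten. These two formulas are the engine that lets us push $X_H$ across the symmetry.

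With them in hand, items~(3) and~(4) are immediate: $X_H(\Phi^*g)=\Phi^*(X_Hg)=\Phi^*0=0$, and $X_H(\Lie_Yg)=\Lie_Y(X_Hg)=\Lie_Y0=0$. No further input is needed, which matches the fact that these items only assume the dynamical symmetry property.

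Items~(1) and~(2) require in addition that the \emph{dissipation rate} $\Rz(H)$ itself be invariant under the symmetry. For (1), a strict Hamiltonian symmetry $\Phi$ is by definition a (strict) cocontactomorphism with $\Phi^*H=H$; applying the proposition that cocontactomorphisms preserve the Reeb vector fields gives $\Phi_*\Rz=\Rz$, hence $\Phi^*(\Rz(H))=\Rz(\Phi^*H)=\Rz(H)$. Combining this with the first identity yields
\begin{equation}
X_H(\Phi^*f)=\Phi^*(X_Hf)=\Phi^*\bigl(-\Rz(H)\,f\bigr)=-\Rz(H)\,\Phi^*f,
\end{equation}
so $\Phi^*f$ is dissipated. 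For (2), an infinitesimal strict Hamiltonian symmetry $Y$ satisfies $\Lie_Y\eta=0$, $\Lie_Y\tau=0$ and $\Lie_YH=0$; by the corollary that infinitesimal cocontactomorphisms commute with the Reeb vector fields one has $[Y,\Rz]=0$, whence $Y(\Rz(H))=\Rz(YH)=0$. Combining with the second identity,
\begin{equation}
X_H(\Lie_Yf)=\Lie_Y(X_Hf)=\Lie_Y\bigl(-\Rz(H)\,f\bigr)=-\Rz(H)\,\Lie_Yf,
\end{equation}
so $\Lie_Yf$ is dissipated.

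There is no real obstacle here: the whole argument is a short formal manipulation repeated in four slight variations. The only point worth stressing is that the extra strictness hypothesis in~(1) and~(2) is exactly what is required to propagate the factor $\Rz(H)$ through the symmetry; conservation in~(3) and~(4) needs nothing beyond the dynamical symmetry condition, since the right-hand side of the defining equation $X_Hg=0$ trivially pulls back or Lie-derives to zero.
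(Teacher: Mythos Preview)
Your proof is correct and follows essentially the same route as the paper's: both use the commutation identities $X_H\circ\Phi^*=\Phi^*\circ X_H$ (from $\Phi_*X_H=X_H$) and $X_H\circ\Lie_Y=\Lie_Y\circ X_H$ (from $[Y,X_H]=0$), and then push $X_H$ through the symmetry. In fact, you are more careful than the paper at the one nontrivial point: in items~(1) and~(2) the paper silently writes $\Phi^*\bigl(-\Rz(H)f\bigr)=-\Rz(H)\,\Phi^*f$ and $\Lie_Y\bigl(-\Rz(H)f\bigr)=-\Rz(H)\,\Lie_Yf$, whereas you explicitly invoke $\Phi_*\Rz=\Rz$ (resp.\ $[Y,\Rz]=0$) together with $\Phi^*H=H$ (resp.\ $Y(H)=0$) to justify the invariance of the dissipation rate $\Rz(H)$.
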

\begin{proof}
    Let $f$ and $g$ be a dissipated and a conserved quantity, respectively.
    Suppose that $\Phi\colon M\to M$ is an strict Hamiltonian symmetry and a dynamical symmetry. Then,
    $$ \liedv{X_H} \widehat{f}
    = \liedv{X_H} (\Phi^\ast f)
    = \Phi^\ast \left(\liedv{\Phi_\ast X_H} f\right)
    = \Phi^\ast \left(\liedv{X_H} f\right)
    = \Phi^\ast \left(-\liedv{\Rz}(H) f\right)
    = -\liedv{\Rz}(H) \Phi^\ast f\, .
    $$
    Similarly, if $\Phi$ is a dynamical symmetry, then
    $$\liedv{X_H} \widehat{g}
    = \liedv{X_H} (\Phi^\ast g)
    = \Phi^\ast \left(\liedv{\Phi_\ast X_H} g\right)
    = \Phi^\ast \left(\liedv{X_H} g\right)
    = 0\, .
    $$
    If $Y \in \mathfrak{X}(M)$ is an infinitesimal dynamical symmetry, then
     $$ \Lie_{X_H}\widetilde{g} = \Lie_{X_H}\Lie_Y g = \Lie_{[X_H,Y]} g + \Lie_Y\Lie_{X_H} g = 0\,.$$
    Finally, if $Y \in \mathfrak{X}(M)$ is an infinitesimal strict Hamiltonian symmetry and an infinitesimal dynamical symmetry, we have that
     \begin{equation}
         \liedv{X_H} \widetilde{f}
        = \liedv{X_H} \left( \liedv{Y} f \right)
        =\liedv{[X_H,Y]} f + \liedv{Y} \left(\liedv{X_H} f \right)
        = \liedv{Y}\left( - \liedv{\Rz} (H) f\right)
        = - \liedv{\Rz} (H)  \left( \liedv{Y} f \right)\, . \nonumber
    \end{equation}
\end{proof}




The results from Proposition \ref{prop:new_quantities} cannot be extended to generalized infinitesimal dynamical symmetries. As a matter of fact, we have the following counterexample.

\begin{exmpl}\label{counterxample_new_quantities_generalized}
    Consider the same system as in Example~\ref{ex:example}. Let $Y\in \mathfrak{X}(\R^4\setminus \{0\})$ be the vector field $Y=\frac{\partial}{\partial p}$. We have that $[Y,X_H]\neq0$, but $\eta([Y,X_H])=0$ therefore, it is a generalized infinitesimal symmetry but it is not a dynamical symmetry.
    The function $f(t,x,p,z) = p$ is a dissipated quantity, but $\Lie_Y f =1$ is not a dissipated quantity. Likewise, $\Lie_Y H = p$ is not a dissipated quantity either. Finally,
    \begin{equation}
        \Lie_Y\frac{H}{f} =\frac12-\frac{z}{p^2}\,,
    \end{equation}
    is not a conserved quantity.
\end{exmpl}

It is also worth mentioning that preserving the Hamiltonian is not a sufficient condition for a diffeomorphism (vector field) to be a (infinitesimal) dynamical symmetry. It is not a sufficient condition for being a generalized (infinitesimal) dynamical symmetry either.
\begin{exmpl}
    Consider the cocontact Hamiltonian system $(\R^4, \tau, \eta, H)$, with $\tau= \dd t,\ \eta = \dd z - p \dd x$ and 
    $$H=\frac{p^2}{2}\, ,$$
    where $(t, x, p, z)$ are the canonical coordinates in $\R^4$. Its Hamiltonian vector field is given by
    \begin{equation}
        X_H = \parder{}{t} +  p \parder{}{x} + \frac{p^2}{2} \parder{}{z}.
    \end{equation}    
    Let $Y= z\tparder{}{z}$. One can check that $Y(H) = 0$, but $[Y, X_H]\neq 0$ and $\eta([Y, X_H]) \neq 0$. Similarly, $\Phi \colon \R^4 \to \R^4,\ (t, x, p, z)\mapsto (t, x, p, 2 z)$ is a diffeomorphism preserving the Hamiltonian function $H$ but not the vector field $X_H$.  
\end{exmpl}

Furthermore, we can consider the following generalization of infinitesimal $\rho$-conformal Hamiltonian symmetries.

\begin{dfn}
    Given a cocontact Hamiltonian system $(M,\tau,\eta, H)$, a \emph{$(\rho, g)$-Cartan symmetry} is a vector field $Y\in\X(M)$ such that
    $$ \Lie_Y\eta = \rho \eta + \d g\,,\qquad \Lie_Y H = \rho H + g \Rz(H)\,,\qquad \contr{Y}\tau = 0\,, $$
    where $\rho,g\in\Cinfty(M)$.
\end{dfn}

Clearly, a $\rho$-conformal Hamiltonian symmetry is a $(\rho, 0)$-Cartan symmetry. On the other hand, $(0,g)$-Cartan symmetries are the analogous of Cartan symmetries in symplectic Hamiltonian systems (see \cite{Lopez1999} for instance). 

\begin{thm}\label{thm_Cartan_dissipated}
    If $Y$ is a $(\rho, g)$-Cartan symmetry of a cocontact Hamiltonian system $(M,\tau,\eta,H)$, the function $f = g - \contr{Y}\eta$ is a dissipated quantity.
\end{thm}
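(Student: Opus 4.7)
The plan is to compute $X_H(f)$ directly, using Cartan's magic formula together with the defining properties of the Cartan symmetry and the Hamiltonian equations~\eqref{eq:Ham-eq-cocontact-vectorfields-equivalent}, and show that it equals $-\Rz(H) f$.

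First, I would extract information about $\contr{[Y, X_H]}\eta$ by computing $\Lie_Y(\contr{X_H}\eta)$ in two different ways. On one hand, since $\contr{X_H}\eta = -H$ by the Hamiltonian equations, the second defining property of a $(\rho,g)$-Cartan symmetry gives
\begin{equation}
\Lie_Y(\contr{X_H}\eta) = -\Lie_Y H = -\rho H - g\,\Rz(H).
\end{equation}
On the other hand, using the identity $\Lie_Y \contr{X_H} - \contr{X_H}\Lie_Y = \contr{[Y, X_H]}$ together with $\Lie_Y \eta = \rho\eta + \d g$ and $\contr{X_H}\eta = -H$, I get
\begin{equation}
\Lie_Y(\contr{X_H}\eta) = \contr{[Y, X_H]}\eta + \contr{X_H}(\rho\eta + \d g) = \contr{[Y, X_H]}\eta - \rho H + X_H(g).
\end{equation}
Equating both expressions yields the key identity $\contr{[Y, X_H]}\eta = -g\,\Rz(H) - X_H(g)$.

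Next, I would apply $X_H$ to $f = g - \contr{Y}\eta$ and expand using Cartan's formula. Since $X_H(\contr{Y}\eta) = \Lie_{X_H}(\contr{Y}\eta) = \contr{Y}\Lie_{X_H}\eta + \contr{[X_H, Y]}\eta$, and by~\eqref{eq:Ham-eq-cocontact-vectorfields-equivalent} we have $\Lie_{X_H}\eta = -\Rz(H)\eta - \Rt(H)\tau$, I obtain
\begin{equation}
X_H(f) = X_H(g) + \Rz(H)\contr{Y}\eta + \Rt(H)\contr{Y}\tau - \contr{[X_H, Y]}\eta.
\end{equation}
The condition $\contr{Y}\tau = 0$ kills the third term, and substituting $-\contr{[X_H, Y]}\eta = \contr{[Y, X_H]}\eta = -g\,\Rz(H) - X_H(g)$ from the previous step gives
\begin{equation}
X_H(f) = X_H(g) + \Rz(H)\contr{Y}\eta - g\,\Rz(H) - X_H(g) = -\Rz(H)\bigl(g - \contr{Y}\eta\bigr) = -\Rz(H) f,
\end{equation}
as required.

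The argument is essentially a bookkeeping exercise with the Lie derivative and interior product; there is no conceptual obstacle, but the one point that requires care is keeping track of signs when swapping $[X_H, Y]$ and $[Y, X_H]$ and ensuring that the $\Rt(H)\tau$ term in $\Lie_{X_H}\eta$ is annihilated by $\contr{Y}\tau = 0$. As a sanity check, setting $g = 0$ recovers exactly the statement that an infinitesimal $\rho$-conformal Hamiltonian symmetry produces the dissipated quantity $-\contr{Y}\eta$, consistent with the earlier Noether-type theorem.
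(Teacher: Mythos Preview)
Your proof is correct and follows essentially the same approach as the paper's: both compute $\Lie_{X_H} f$ directly, expand $\Lie_{X_H}(\contr{Y}\eta)$ via the commutator identity, use $\Lie_{X_H}\eta = -\Rz(H)\eta - \Rt(H)\tau$, and then rewrite $\contr{[Y,X_H]}\eta$ as $\Lie_Y(\contr{X_H}\eta) - \contr{X_H}\Lie_Y\eta$ to bring in the Cartan symmetry conditions. The only difference is organizational---you isolate the identity $\contr{[Y,X_H]}\eta = -g\,\Rz(H) - X_H(g)$ as a separate step, whereas the paper carries everything through in a single chain of equalities.
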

\begin{proof}
    \begin{align*}
        \Lie_{X_H} f &= \Lie_{X_H}(g - \contr{Y}\eta) = \contr{X_H}\d g - \contr{Y}\Lie_{X_H}\eta - \contr{[X_H,Y]}\eta\\
        &= \contr{X_H}\d g + \contr{Y}( \Rz(H)\eta + \Rt(H)\tau ) + \contr{[Y,X_H]}\eta\\
        &= \contr{X_H}\d g + \Rz(H)\contr{Y}\eta + \Rt(H)\contr{Y}\tau + \Lie_Y \contr{X_H}\eta - \contr{X_H}\Lie_Y\eta\\
        &= \contr{X_H}\d g + \Rz(H)\contr{Y}\eta + \Rt(H)\contr{Y}\tau - \Lie_Y H - \contr{X_H}(\rho\eta + \d g)\\
        &= \Rz(H)\contr{Y}\eta - \rho H - g\Rz(H) - \rho \contr{Y}\eta = -(g - \contr{X_H}\eta)\Rz(H)\\
        &= -\Rz(H) f\,.
    \end{align*}
\end{proof}



\begin{prop}
    If $Y$ is a $(\rho, g)$-Cartan symmetry, then $Z= Y - g \Rz$ is a generalized infinitesimal dynamical symmetry.
\end{prop}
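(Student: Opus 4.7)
\bigskip
\textbf{Proof proposal.} The plan is a direct computation. I have to verify the two defining conditions of a generalized infinitesimal dynamical symmetry for $Z = Y - g\Rz$: namely, $\contr{Z}\tau = 0$ and $\eta([Z,X_H]) = 0$.

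The first condition is immediate, since $\contr{Z}\tau = \contr{Y}\tau - g\,\contr{\Rz}\tau = 0 - 0 = 0$, using the Cartan-symmetry hypothesis $\contr{Y}\tau = 0$ and the defining property $\contr{\Rz}\tau = 0$ of the contact Reeb vector field.

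For the second condition, I would first compute $\Lie_Z\eta$. Using the identity $\Lie_{g\Rz}\eta = g\,\Lie_{\Rz}\eta + \d g\wedge\contr{\Rz}\eta$ together with $\Lie_{\Rz}\eta = 0$ and $\contr{\Rz}\eta = 1$, one obtains $\Lie_{g\Rz}\eta = \d g$, and hence
\begin{equation}
\Lie_Z\eta \;=\; \Lie_Y\eta - \Lie_{g\Rz}\eta \;=\; (\rho\eta + \d g) - \d g \;=\; \rho\,\eta.
\end{equation}
Then I would apply the standard identity $\contr{[Z,X_H]}\eta = \Lie_Z(\contr{X_H}\eta) - \contr{X_H}\Lie_Z\eta$. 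Since $\contr{X_H}\eta = -H$, the first term equals $-\Lie_Z H = -\Lie_Y H + g\,\Rz(H)$, and the second term equals $-\rho H$. Substituting the Cartan-symmetry relation $\Lie_Y H = \rho H + g\,\Rz(H)$ gives
\begin{equation}
\eta([Z,X_H]) \;=\; -(\rho H + g\,\Rz(H)) + g\,\Rz(H) + \rho H \;=\; 0,
\end{equation}
which is the desired identity.

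There is no real obstacle: the argument is a short bookkeeping computation once one correctly applies the Leibniz rule for the Lie derivative of a form along a function-rescaled vector field. The only point to be careful about is using $\Lie_{\Rz}\eta = 0$ (which follows from $\contr{\Rz}\eta = 1$ and $\contr{\Rz}\d\eta = 0$ via Cartan's magic formula) so that the $\d g$ contribution from $\Lie_Y\eta$ cancels exactly against the $\d g$ contribution from $\Lie_{g\Rz}\eta$, leaving the clean expression $\Lie_Z\eta = \rho\eta$ that makes the final cancellation transparent.
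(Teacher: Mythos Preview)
Your proof is correct. The computations are all right; the only cosmetic quibble is writing $\d g\wedge\contr{\Rz}\eta$ for the product of a one-form with a function, but since $\contr{\Rz}\eta=1$ this is harmless and the conclusion $\Lie_{g\Rz}\eta=\d g$ is correct.

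Your route is, however, genuinely different from the paper's. You verify the two defining conditions for $Z$ directly, using the Cartan-symmetry hypotheses and the identity $\contr{[Z,X_H]}\eta=\Lie_Z(\contr{X_H}\eta)-\contr{X_H}\Lie_Z\eta$. The paper instead proceeds indirectly: it first invokes Theorem~\ref{thm_Cartan_dissipated} to conclude that $f=g-\contr{Y}\eta$ is a dissipated quantity, then applies Noether's Theorem~\ref{Noether_thm} to obtain the generalized infinitesimal dynamical symmetry $X_f-\Rt$, and finally computes $\flat(X_f)$ explicitly to show that $X_f-\Rt=Y-g\Rz$. Your argument is shorter and more self-contained. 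The paper's argument, on the other hand, yields the additional information (recorded in the remark following the proposition) that $Y-g\Rz$ is precisely the generalized infinitesimal dynamical symmetry canonically associated via Noether's theorem to the dissipated quantity $g-\contr{Y}\eta$, so the two dissipated quantities coincide; this link is not visible from your direct verification.
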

\begin{proof}   
    Suppose that $Y$ is a $(\rho, g)$-Cartan symmetry. Then, by Theorem \ref{thm_Cartan_dissipated}, the function $f= g - \contr{Y}\eta$ is a dissipated quantity, so, by Theorem \ref{Noether_thm}, $Z=X_f-\Rt$ is a generalized infinitesimal dynamical symmetry. The Hamiltonian vector field of $f$ is given by
    \begin{align}
        \flat (X_f) = \d g - \d (\contr{Y} \eta) 
        - \left( \liedv{\Rz} g - \liedv{\Rz} \contr{Y} \eta + g - \contr{Y} \eta \right) \eta
        + \left(1 - \liedv{\Rt} g + \liedv{\Rt} \contr{Y} \eta \right) \tau\,,
    \end{align}
    but
    \begin{align}
         \liedv{\Rz} \contr{Y} \eta 
       & = \contr{[\Rz, Y]} \eta + \contr{Y} \liedv{Rz}  \eta   
        = \contr{[\Rz, Y]} \eta
        = - \contr{[Y, \Rz]} \eta
        = -\liedv{Y} \contr{\Rz} \eta + \contr{\Rz} \liedv{Y} \eta
        =  \contr{\Rz} \liedv{Y} \eta\\
        &= \contr{\Rz} \left(\rho \eta + \d g\right)
        = \rho + \liedv{\Rz} g\, ,
    \end{align}
    and, similarly, $\liedv{\Rt} \contr{Y} \eta = \liedv{\Rt} g$. 
    In addition, 
    \begin{align}
        \d (\contr{Y} \eta)  = \liedv{Y} \eta - \contr{Y} \d \eta
        = \rho \eta + \d g - \contr{Y} \d \eta\, .
    \end{align}
    Thus,
    \begin{align}
        \flat (X_f) = \contr{Y} \d \eta
        - \left(g - \contr{Y} \eta \right) \eta
        + \tau\,.
    \end{align}
    On the other hand,
    \begin{equation}
        \flat (Y) = (\contr{Y} \eta) \eta + \contr{Y} \d \eta\,,
    \end{equation}
    so we can write
    \begin{equation}
        \flat (X_f - Y) = -g\eta + \tau\,,
    \end{equation}
    that is,
    \begin{equation}
        X_f = Y - g \Rz + \Rt\,,
    \end{equation}
    so $Z= Y - g \Rz$.
\end{proof}

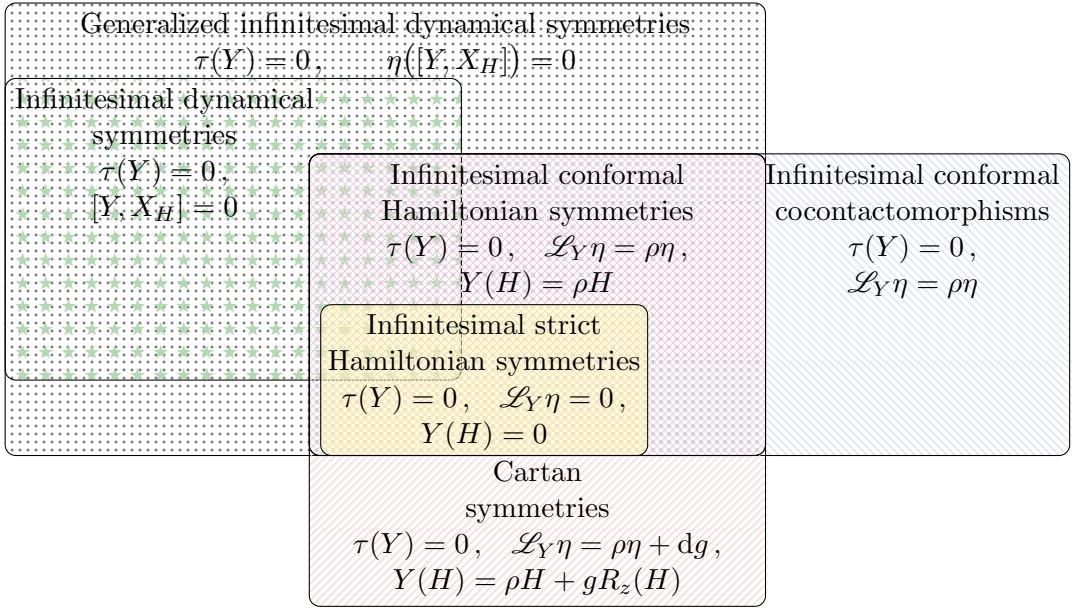
\begin{figure}[t]
    \centering  
  \begin{tikzpicture}[every text node part/.style={align=center}]
      \draw (0,0) node[minimum height=6cm,minimum width=10cm,draw, pattern=dots, pattern color=gray, rounded corners] (gen) {};
      \node[below] at (gen.north) {Generalized infinitesimal dynamical symmetries\\ $\tau(Y)=0\,,\qquad \eta\big([Y, X_H]\big)=0$};
      \draw (-2,0) node[minimum height=4cm, minimum width = 6cm, fill, draw, pattern=fivepointed stars, pattern color=green!30, rounded corners] (dyn) {};
      \node[below right] at (dyn.north west) {Infinitesimal dynamical\\ symmetries\\ $\tau(Y)=0\,,$\\ $[Y, X_H]=0$};
      \draw (4,-1) node[minimum height=4cm, minimum width = 10cm, draw, pattern=north west lines, pattern color=blue!15, rounded corners](coco){};
      \node[below left] at (coco.north east) {Infinitesimal conformal\\cocontactomorphisms\\$\tau(Y)=0\,,$\\$\liedv{Y}\eta=\rho \eta$};
      \draw (2,-2) node[minimum height=6cm, minimum width = 6cm, draw, pattern=north east lines, pattern color=red!20, rounded corners](cartan){};
      \node[above] at (cartan.south) {Cartan\\ symmetries\\$\tau(Y)=0\,,$\quad $\liedv{Y}\eta=\rho \eta+\d g\,,$\\ $Y(H)=\rho H +g \Rz(H)$};
      \draw (2,-1) node[minimum height=4cm, minimum width = 6cm, draw, rounded corners](conham){};
      \node[below] at (conham.north) {Infinitesimal conformal\\ Hamiltonian symmetries\\$\tau(Y)=0\,,$\quad $\liedv{Y}\eta=\rho \eta\,,$\\ $Y(H)=\rho H$};
      \draw (1.3,-2) node[color=yellow!40, minimum height=2cm, minimum width = 4.3cm, fill, semitransparent, rounded corners](strham){};
      \draw (1.3,-2) node[minimum height=2cm, minimum width = 4.3cm, draw, rounded corners](strham){};
      \node[below] at (strham.north) {Infinitesimal strict\\ Hamiltonian symmetries\\$\tau(Y)=0\,,$\quad$\liedv{Y}\eta=0\,,$\\ $Y(H)=0$};
    \end{tikzpicture}
  
    \caption{Classification of infinitesimal symmetries and relations between them. Infinitesimal dynamical symmetries, infinitesimal conformal Hamiltonian symmetries and infinitesimal strict Hamiltonian symmetries close Lie algebras, whereas Cartan symmetries and generalized infinitesimal dynamical symmetries do not close Lie algebras.}
    \label{fig:infinitesimal_symmetries}
  
\end{figure}

\begin{rmrk}
    If $Y$ is a $(\rho, g)$-Cartan symmetry and $Z= Y-g \Rz$ is its associated generalized infinitesimal dynamical symmetry, then the dissipated quantities associated to $Y$ and to $Z$ via Theorems \ref{Noether_thm} and \ref{thm_Cartan_dissipated} coincide.
\end{rmrk}

Regarding the Lie algebra structures formed by the sets of symmetries, we have the following result:

\begin{prop}[Lie algebras of symmetries]\phantom{m}
    \begin{enumerate}[{\rm(1)}]
        \item Infinitesimal conformal Hamiltonian symmetries close a Lie subalgebra of $(\mathfrak{X}(M), [\cdot,\cdot])$. More precisely, if $Y_1$ is a $\rho_1$-conformal Hamiltonian symmetry and $Y_2$ is a $\rho_2$-conformal Hamiltonian symmetry, then $[Y, Z]$ is a $\widetilde \rho$-conformal Hamiltonian symmetry, where $\widetilde \rho = Y_1(\rho_2) - Y_2(\rho_1)$.
        \item Infinitesimal strict Hamiltonian symmetries close a Lie subalgebra from the Lie algebra of infinitesimal conformal Hamiltonian symmetries.
    \end{enumerate}
\end{prop}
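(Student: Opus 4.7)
The plan is to verify, for part (1), the three defining conditions of an infinitesimal $\tilde\rho$-conformal Hamiltonian symmetry directly on $[Y_1, Y_2]$ using standard Cartan calculus identities, and then deduce part (2) as the special case $\rho_1 = \rho_2 = 0$.

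First I would handle the condition $\iota_{[Y_1,Y_2]}\tau = 0$. Using the identity $\iota_{[Y_1,Y_2]} = \Lie_{Y_1}\iota_{Y_2} - \iota_{Y_2}\Lie_{Y_1}$, the first term vanishes because $\iota_{Y_2}\tau = 0$, and the second vanishes because $\Lie_{Y_1}\tau = \d\iota_{Y_1}\tau + \iota_{Y_1}\d\tau = 0$ (using $\iota_{Y_1}\tau = 0$ and $\d\tau = 0$). Next, for the condition on $\eta$, I would use $\Lie_{[Y_1,Y_2]} = [\Lie_{Y_1},\Lie_{Y_2}]$ to compute
\begin{equation}
\Lie_{[Y_1,Y_2]}\eta = \Lie_{Y_1}(\rho_2\eta) - \Lie_{Y_2}(\rho_1\eta) = \bigl(Y_1(\rho_2) - Y_2(\rho_1)\bigr)\eta + (\rho_2\rho_1 - \rho_1\rho_2)\eta = \tilde\rho\,\eta.
\end{equation}
Finally, for the condition on $H$, the same commutator formula gives
\begin{equation}
\Lie_{[Y_1,Y_2]}H = Y_1(\rho_2 H) - Y_2(\rho_1 H) = \bigl(Y_1(\rho_2) - Y_2(\rho_1)\bigr)H + (\rho_2\rho_1 - \rho_1\rho_2)H = \tilde\rho\, H.
\end{equation}
Together with the skew-symmetry and Jacobi identity inherited from $[\cdot,\cdot]$ on $\mathfrak{X}(M)$, this shows the set of infinitesimal conformal Hamiltonian symmetries is a Lie subalgebra and identifies the conformal factor of the bracket.

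For part (2), I would simply observe that the infinitesimal strict Hamiltonian symmetries are precisely the $0$-conformal Hamiltonian symmetries. Substituting $\rho_1 = \rho_2 = 0$ into the formula $\tilde\rho = Y_1(\rho_2) - Y_2(\rho_1)$ yields $\tilde\rho = 0$, so $[Y_1, Y_2]$ is again a strict Hamiltonian symmetry, exhibiting the strict symmetries as a Lie subalgebra of the conformal ones.

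No real obstacle is expected: every step is a direct Cartan-calculus manipulation, and the only mild subtlety is making sure to exploit $\d\tau = 0$ when showing $\Lie_{Y_1}\tau = 0$, which is what guarantees that the condition $\iota_Y\tau = 0$ is preserved under the Lie bracket.
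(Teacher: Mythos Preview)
Your proposal is correct and follows essentially the same approach as the paper: both use the commutator identity $\Lie_{[Y_1,Y_2]} = [\Lie_{Y_1},\Lie_{Y_2}]$ to verify the $\eta$- and $H$-conditions with the same computations, and both deduce the strict case by setting $\rho_1=\rho_2=0$. The only minor differences are that the paper first frames the computation within the broader class of Cartan symmetries (showing along the way that those do \emph{not} close) before specializing to $g_1=g_2=0$, while you go directly to the conformal case and also explicitly verify the condition $\iota_{[Y_1,Y_2]}\tau=0$, which the paper's proof leaves implicit.
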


\begin{proof}
        If $Y_i$ is a $(\rho_i, g_i)$-Cartan symmetry (for $i=1,2$), then
        \begin{align}
            \liedv{[Y_1, Y_2]} \eta 
            & = \liedv{Y_1} \liedv{Y_2} \eta -  \liedv{Y_2} \liedv{Y_1} \eta 
            = \liedv{Y_1} \left( \rho_2 \eta + \d g_2  \right)
            - \liedv{Y_2} \left( \rho_1 \eta + \d g_1  \right)\\
            & = \left(Y_1 (\rho_2) - Y_2(\rho_1)  \right) \eta 
            + \d \left(Y_1(g_2) -Y_2(g_1) \right) 
            + \rho_2 \d g_1 + \rho_1 \dd g_2,
        \end{align}
        so, in general, $[Y_1, Y_2]$ is not a Cartan symmetry (see Example \ref{ex:lieCartan}). However, for $g_1=g_2= 0$, 
        \begin{equation}
            \liedv{[Y_1, Y_2]} \eta = \left(Y_1 (\rho_2) - Y_2(\rho_1)  \right) \eta = \widetilde {\rho} \eta\,.
        \end{equation}
        Moreover,
         \begin{equation}
           \liedv{[Y_1, Y_2]} H = \liedv{Y_1} \liedv{Y_2} H -  \liedv{Y_2} \liedv{Y_1} H
            = \liedv{Y_1} \left( \rho_2 H \right)
            - \liedv{Y_2} \left( \rho_1 H  \right) = \left(Y_1 (\rho_2) - Y_2(\rho_1)  \right) H\,,
        \end{equation}
        and hence $[Y_1, Y_2]$ is an infinitesimal $\widetilde \rho$-conformal Hamiltonian symmetry.
        
        In particular, if $Y_1$ an $Y_2$ are infinitesimal strict Hamiltonian symmetries, then $\rho_1=\rho_2\equiv 0$, so $\widetilde \rho\equiv 0$ and thus $[Y_1, Y_2]$ is an infinitesimal strict Hamiltonian symmetry.
\end{proof}

In general, Cartan symmetries do not close a Lie subalgebra.
\begin{exmpl}\label{ex:lieCartan}
    Consider the cocontact Hamiltonian system $(\R^4, \tau, \eta, H)$, with $\tau= \dd t,\ \eta = \dd z - p \dd q$ and 
    $$H=e^{q-z} ,$$
    where $(t, q, p, z)$ are the canonical coordinates in $\R^4$. 
    The vector field 
    $$
    Y_1=q\frac{\partial}{\partial z}
    $$
    is a $(0,q)$-Cartan symmetry and
    $$
    Y_2=(p-1)e^{q-z}\frac{\partial}{\partial p}-e^{q-z}\frac{\partial}{\partial z}
    $$
    is a $(e^{q-z},0)$-Cartan symmetry. Their commutator is $[Y_1,Y_2]=-qY_2$, and
    $$
    \Lie_{[Y_1,Y_2]}\eta=-qe^{q-z}\eta+e^{q-z}\dd q\,.
    $$
    There is no function $f\in \Cinfty(\R^4)$ such that $f\eta+e^{q-z}\dd q$ is exact,
    so it is not possible to write $\Lie_{[Y_1,Y_2]}\eta=\rho \eta + \dd g$ for any functions $\rho, g\in \Cinfty(\R^4)$, and hence $[Y_1, Y_2]$ is not a Cartan symmetry.
\end{exmpl}



The types of symmetries and the relations between them are summarized in Figure \ref{fig:infinitesimal_symmetries}.

\section{Symmetries and dissipated quantities of cocontact Lagrangian systems}
\label{sec_symmetries_Lagrangian}



Consider a regular cocontact Lagrangian system $(\R \times \T Q \times \R, \L)$, with cocontact structure $(\d t, \eta_\L)$. Since $(\R \times \T Q \times \R, \d t, \eta_\L, E_\L)$ is a cocontact Hamiltonian system, every result from Section \ref{section_symmetries_Hamiltonian} can be applied to this case. Moreover, making use of the geometric structures of the tangent bundle \cite{Yano1973,deLeon1989} (and their natural extensions to $\R \times \T Q \times \R$) we can consider additional types of symmetries. A summary of these symmetries and their relations can be found in Figure~\ref{fig:infinitesimal_symmetries_Lagrangian}. The relation between (extended) natural symmetries of the Lagrangian and Hamiltonian symmetries is depicted in Figure~\ref{fig:infinitesimal_symmetries_Lagrangian_Hamiltonian}.


Consider a diffeomorphism $\varphi=(\varphi_Q, \varphi_z)\colon Q \times \R \to Q\times \R$, where $\varphi_Q\colon Q \to Q$ and $\varphi_z\colon \R \to \R$ are diffeomorphisms (in an abuse of notation we omit the projections).
Then, the {\emph{action-dependent lift}} of $\varphi$ is the diffeomorphism $\widetilde \varphi = (\id_\R, \T \varphi_Q, \varphi_z): \R \times \T Q \times \R \to \R \times \T Q \times \R$. A vector field $Y \in \mathfrak{X}(Q\times \mathbb{R})$ is \emph{split} if it is projectable by $\text{pr}_Q:Q\times\mathbb{R}\rightarrow Q$ and by $\text{pr}_{\mathbb{R}}:Q\times\mathbb{R}\rightarrow\mathbb{R}$. Given a split vector field $Y \in \mathfrak{X}(Q\times \mathbb{R})$, its {\emph{action-dependent lift}} is the vector field $\bar Y^C \in \mathfrak{X}(\R \times \T Q \times R)$ whose local flow is the action-dependent lift of the local flow of $Y$.  In other words, if $Y$ is locally of the form

\begin{equation}\label{eq:local-Y}
Y=Y^i(q) \frac{\partial}{\partial q^i}+ \zeta(z) \frac{\partial}{\partial z}\,,
\end{equation}
its {\emph{action-dependent complete lift}} is the vector field given locally by
$$\bar Y^C=Y^i(q) \frac{\partial}{\partial q^i}
+ v^j \frac{\partial Y^i}{\partial q^j} \frac{\partial}{\partial v^i}
+ \zeta(z) \frac{\partial}{\partial z}\, .$$


Given a function $f\in \Cinfty(Q)$, its \emph{vertical lift} is the function
$f^V=f\circ \tau_Q \circ \tau_2\in \Cinfty(\R \times \T Q \times \R)$, where $\tau_Q\circ\tau_2:\R \times \T Q \times \R\rightarrow Q$ is the projection (see Section~\ref{sec:lag}). A 1-form $\omega \in \Omega^1(Q)$ can be regarded as a function $\widehat \omega \in \Cinfty(\T Q)$. Locally, if $\omega= \omega_i (q) \d q^i$, then $\widehat \omega= \omega_i(q) v^i$. 
The {vertical lift} of a vector field $X\in \mathfrak{X}(Q)$ to $\T Q$ is the unique vector field $X^V \in \mathfrak{X}(\T Q)$ such that $X^V(\widehat \omega)=(\omega(X))^V$ for any $\omega \in \Omega^1(Q)$. The \emph{vertical lift} of an split $Y\in \mathfrak{X}(Q\times \R)$ to $\R \times \T Q \times \R$ is the vector field $\bar Y^V\in \mathfrak{X}(\R \times \T Q \times \R)$ given by the vertical lift of $\T \pr{Q} Y\in \mathfrak{X}(Q)$ to $TQ$. Locally, if $Y$ has the local expression \eqref{eq:local-Y}, its vertical lift reads
$$ \bar Y^V= Y^i(q) \parder{}{v^i}\,. $$ 
The following properties hold for any $X, Y \in \mathfrak{X}(Q\times \R)$:
\begin{equation} \label{eq:properties_lifts}
    [\bar X^C, \Delta] = 0\,,\qquad \mathcal{S} (\bar X^C) = \bar X^V\,,\qquad\mathcal{S} (\bar X^V) = 0\,,\qquad
    \Lie_{\bar X^V}\mathcal{S} = 0\,,\qquad\Lie_{\bar X^C}\mathcal{S} = 0\,,
\end{equation}
where $\mathcal{S}$ and $\Delta$ denote the vertical endomorphism and the Liouville vector field, with local expressions~\eqref{eq:structures_tangent}.



\subsection{Lagrangian symmetries}
 We will denote $\phi'\equiv\dfrac{\d \phi}{\d z}$. Henceforth, all the Lagrangian systems are assumed to be regular.

\begin{dfn}
    A diffeomorphism
    $\Phi\colon \R \times \T Q \times \R\to \R \times \T Q \times \R$ of the form
    $$\Phi\colon (t, q, v, z) \mapsto \left(t, \Phi_q(t,q,v), \Phi_v(t,q,v), \Phi_z(z) \right)$$ is called an \emph{extended symmetry of the Lagrangian} if $\Phi^\ast \L = \Phi_z'\L$. In addition, if $\Phi$ is the action-dependent lift of some $\varphi \in \Diff(Q\times \R)$, then it is called an \emph{extended natural symmetry of the Lagrangian}.
    
    A vector field $Y \in \mathfrak{X}(\R \times \T Q \times \R)$ of the form
    $$ Y=A^i(t,q, v) \parder{}{q^i} + B^i(t,q,v) \parder{}{v^i}+ \zeta(z) \parder{}{z}$$
    is called an \emph{infinitesimal extended symmetry of the Lagrangian} if $\liedv{Y} \L = \zeta' \L$. 
    In addition, if $Y$ is the action-dependent complete lift of some $X \in \mathfrak{X}(Q\times \R)$, then it is called an \emph{infinitesimal extended natural symmetry of the Lagrangian}.
\end{dfn}

\begin{prop}\label{prop:extLagSym}
    An (infinitesimal) extended natural symmetry $\bar Y^C$ of the Lagrangian $\L$ is an (infinitesimal) $\Phi_z'$-conformal ($\zeta'$-conformal) Hamiltonian symmetry of the cocontact Hamiltonian system $(M,\tau,\eta_\L,E_\L)$.
\end{prop}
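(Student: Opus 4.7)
The plan is to verify, at the infinitesimal level, the three defining conditions of an infinitesimal $\zeta'$-conformal Hamiltonian symmetry for the cocontact Hamiltonian system $(M,\d t,\eta_\L,E_\L)$: $\contr{\bar Y^C}\d t=0$, $\liedv{\bar Y^C}\eta_\L=\zeta'\eta_\L$, and $\liedv{\bar Y^C}E_\L=\zeta'E_\L$. The first holds trivially because the complete lift $\bar Y^C=Y^i\partial/\partial q^i+v^j(\partial Y^i/\partial q^j)\partial/\partial v^i+\zeta(z)\partial/\partial z$ has no $\partial/\partial t$ component.

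For the second condition, I would split $\eta_\L=\d z-\theta_\L$. On the $\d z$ piece, $\liedv{\bar Y^C}\d z=\d\zeta(z)=\zeta'\d z$. For the Cartan form $\theta_\L=\transp{\mathcal{S}}(\d\L)$, the identity $\liedv{\bar Y^C}\mathcal{S}=0$ from~\eqref{eq:properties_lifts} lets me commute $\liedv{\bar Y^C}$ past $\transp{\mathcal{S}}$, giving $\liedv{\bar Y^C}\theta_\L=\transp{\mathcal{S}}(\d\liedv{\bar Y^C}\L)=\transp{\mathcal{S}}(\d(\zeta'\L))$. The key observation, read off from the local expression $\mathcal{S}=\partial/\partial v^i\otimes\d q^i$, is that $\transp{\mathcal{S}}(\d z)=\transp{\mathcal{S}}(\d t)=\transp{\mathcal{S}}(\d q^i)=0$ and $\transp{\mathcal{S}}(\d v^i)=\d q^i$; hence the unwanted term $\zeta''\L\,\d z$ coming from expanding $\d(\zeta'\L)$ drops out, leaving $\liedv{\bar Y^C}\theta_\L=\zeta'\theta_\L$. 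Combining yields $\liedv{\bar Y^C}\eta_\L=\zeta'\eta_\L$.

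For the third condition, I would use $E_\L=\Delta(\L)-\L$ together with $[\bar Y^C,\Delta]=0$ from~\eqref{eq:properties_lifts}. Since $\zeta$ depends only on $z$ one has $\Delta(\zeta')=v^i\partial\zeta'(z)/\partial v^i=0$, so
\begin{equation}
\liedv{\bar Y^C}E_\L=\Delta\bigl(\liedv{\bar Y^C}\L\bigr)-\liedv{\bar Y^C}\L=\Delta(\zeta'\L)-\zeta'\L=\zeta'\bigl(\Delta(\L)-\L\bigr)=\zeta'E_\L.
\end{equation}

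The finite case runs along the same lines: $\Phi^\ast\d t=\d t$ is immediate since $\Phi$ is the identity on the $t$-factor, while the lift invariances $\Phi_\ast\mathcal{S}=\mathcal{S}$ and $\Phi_\ast\Delta=\Delta$ (intrinsic to action-dependent lifts) play the role of~\eqref{eq:properties_lifts}, so that $\Phi^\ast\theta_\L=\transp{\mathcal{S}}(\d\Phi^\ast\L)=\transp{\mathcal{S}}(\d(\Phi_z'\L))=\Phi_z'\theta_\L$ and $\Phi^\ast E_\L=\Phi_z'E_\L$ follow by the same cancellation. The only real subtlety is the vanishing $\transp{\mathcal{S}}(\d z)=0$, which absorbs the would-be obstruction produced by the $z$-dependence of the conformal factor; beyond that, the argument is a clean application of the lift identities.
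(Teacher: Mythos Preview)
Your proof is correct and follows essentially the same route as the paper: both arguments rely on the lift identities $[\bar Y^C,\Delta]=0$ and $\liedv{\bar Y^C}\mathcal{S}=0$ from~\eqref{eq:properties_lifts}, together with the fact that $\transp{\mathcal{S}}$ annihilates $\d z$, to obtain $\liedv{\bar Y^C}E_\L=\zeta'E_\L$ and $\liedv{\bar Y^C}\eta_\L=\zeta'\eta_\L$. You are simply more explicit about the cancellation $\transp{\mathcal{S}}(\d z)=0$ and about the finite case (which the paper dismisses as ``similar''), but the underlying argument is the same.
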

\begin{proof}
Clearly, $\contr{\bar Y^C}\tau=0$. Moreover,
$$
\Lie_{\bar Y^C}E_\L=\Lie_{\bar Y^C}(\Delta(\L))- \Lie_{\bar Y^C}(L)=(\Delta-1)(\Lie_{\bar Y^C}(\L))=(\Delta-1)(\zeta'\L)=\zeta'E_{\L}\,,
$$
where we have used that the action-dependent complete lift of a vector field commutes with the Liouville vector field (see properties \eqref{eq:properties_lifts}), and
\begin{align*}
 \Lie_{\bar Y^C}\eta_\L&=\Lie_{\bar Y^C}(\d z- \transp{\cal S} \circ \d \L)=\d \zeta-\transp{\cal S} \circ \d \left(\Lie_{\bar Y^C}\L\right)=\zeta'\d z-\transp{\cal S} \circ \left(\L\d \zeta' +\zeta'\d\L\right)
 \\
 &=\zeta'\left(\d z- \transp{\cal S} \circ \d \L\right)=\zeta'\eta_\L\,.   
\end{align*}

Therefore, $\bar Y^C$ is a $\zeta'$-conformal Hamiltonian symmetry. The case for extended natural symmetries of the Lagrangian is similar.
\end{proof}

\begin{prop}\label{prop:extsym}
    Let $Y=Y^i(q) \tparder{}{q^i}+ \zeta(z) \tparder{}{z}$ be an split vector field on $Q\times \mathbb{R}$. Then $\bar Y^C$ is an infinitesimal extended natural symmetry of $\L$ if, and only if, $\bar Y^V(\L)-\zeta$ is a dissipated quantity.
\end{prop}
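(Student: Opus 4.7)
The plan is a direct computation in Darboux coordinates, supported by two preliminary identities. The starting observation is that, from $\eta_\L = \d z - (\partial \L/\partial v^i)\d q^i$ and the local expressions of $\bar Y^C$ and $\bar Y^V$, one immediately reads
\begin{equation}
    \contr{\bar Y^C}\eta_\L = \zeta - \bar Y^V(\L),
\end{equation}
so the candidate quantity $\bar Y^V(\L) - \zeta$ coincides with $-\contr{\bar Y^C}\eta_\L$. This already suggests invoking Noether's theorem (Theorem \ref{Noether_thm}), but to obtain an equivalence phrased directly in terms of $\bar Y^C$ it is cleanest to compute $\Gamma_\L(\bar Y^V(\L) - \zeta)$ by hand.

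The first preliminary identity is $\Rz^\L(E_\L) = -\partial \L/\partial z$, which follows from a short calculation combining the local formula for $\Rz^\L$ with $\partial E_\L/\partial v^i = v^k W_{ki}$. Consequently, the dissipation condition for $f = \bar Y^V(\L) - \zeta$ reduces to $\Gamma_\L(f) = (\partial \L/\partial z)\,f$. The second identity, obtained from the explicit formula \eqref{eq:Euler-Lagrange-field_local} together with $W^{ij}W_{jk} = \delta^i_k$, is the {\sc sode}-level Herglotz--Euler--Lagrange relation
\begin{equation}
    \Gamma_\L\!\left(\parder{\L}{v^i}\right) = \parder{\L}{q^i} + \parder{\L}{z}\parder{\L}{v^i}.
\end{equation}

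With these tools in hand, the Leibniz rule applied to $\Gamma_\L(Y^i\,\partial \L/\partial v^i)$ together with $\Gamma_\L(z) = \L$ yields
\begin{equation}
    \Gamma_\L(\bar Y^V(\L) - \zeta) = \left(Y^i\parder{\L}{q^i} + v^j\parder{Y^i}{q^j}\parder{\L}{v^i}\right) + \parder{\L}{z}\bar Y^V(\L) - \zeta'\L.
\end{equation}
A direct inspection of the local expression of $\Lie_{\bar Y^C}\L$ shows that the bracketed expression equals $\Lie_{\bar Y^C}\L - \zeta\,\partial \L/\partial z$, so
\begin{equation}
    \Gamma_\L(\bar Y^V(\L) - \zeta) - \parder{\L}{z}(\bar Y^V(\L) - \zeta) = \Lie_{\bar Y^C}\L - \zeta'\L.
\end{equation}
Comparing the left-hand side with the dissipation condition established above shows that $f$ is a dissipated quantity if and only if $\Lie_{\bar Y^C}\L = \zeta'\L$, i.e.\ if and only if $\bar Y^C$ is an infinitesimal extended natural symmetry of $\L$. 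The only obstacle is careful bookkeeping of the $\partial \L/\partial z$ contributions, which cancel systematically once the two preliminary identities are in place.
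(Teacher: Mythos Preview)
Your proof is correct. It reaches the same endpoint as the paper—namely, the identity
\[
\Gamma_\L\big(\bar Y^V(\L)-\zeta\big) - \parder{\L}{z}\big(\bar Y^V(\L)-\zeta\big) \;=\; \Lie_{\bar Y^C}\L - \zeta'\L,
\]
from which the equivalence is immediate—but the route is genuinely different. The paper argues intrinsically: it identifies $\bar Y^V(\L)-\zeta = -\contr{\bar Y^C}\eta_\L$, then uses the Cartan formula and the structure equation $\Lie_{\Gamma_\L}\eta_\L = -\Rz^\L(E_\L)\eta_\L - \Rt^\L(E_\L)\tau$ to reduce the dissipation obstruction to $\eta_\L\big([\bar Y^C,\Gamma_\L]\big)$, and finally computes this bracket-contraction using $\liedv{\bar Y^C}\mathcal S=0$ and $[\bar Y^C,\Delta]=0$ to obtain $\Lie_{\bar Y^C}\L - \Gamma_\L(\zeta)$. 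Your argument instead stays in Darboux coordinates throughout, leaning on the explicit local expression \eqref{eq:Euler-Lagrange-field_local} for $\Gamma_\L$ and the {\sc sode}-level Herglotz--Euler--Lagrange identity $\Gamma_\L(\partial\L/\partial v^i)=\partial\L/\partial q^i + (\partial\L/\partial z)\,\partial\L/\partial v^i$. The paper's approach makes transparent the link with Theorem~\ref{Noether_thm} (the obstruction is literally $\eta_\L([\bar Y^C,\Gamma_\L])$, so an extended natural symmetry is in particular a generalized infinitesimal dynamical symmetry), while yours is more elementary and self-contained, requiring only the local formulas already derived and no manipulation of $\mathcal S$ or $\Delta$.
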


\begin{proof}
    We have that
    $$\eta_\L(\bar Y^C)=(\d z- \transp{\cal S} \circ \d \L) (\bar Y^C)=\zeta-\bar Y^V(\L)\, ,$$
    where we have used the second of the properties~\eqref{eq:properties_lifts},
    so
    \begin{align*}
    \liedv{\Gamma_\L} \left(\bar Y^V(\L)-\zeta\right)&+\Lie_{\Rz^\L}(E_\L) \left(\bar Y^V(\L)-\zeta\right) 
    = -\liedv{\Gamma_\L} \contr{\bar Y^C} \eta_\L - \contr{\bar Y^C} \big( \Rz^\L (E_\L) \eta_\L \big)  \\
    & = -\liedv{\Gamma_\L} \contr{\bar Y^C} \eta_\L + \contr{\bar Y^C} \big( \liedv{\Gamma_\L} \eta_\L +\Rt(E_\L) \tau \big)  
    =-\contr{[\Gamma_L, \bar Y^C]} \eta_\L\, .
    \end{align*}
    If $\Gamma_L$ is the Herglotz--Euler--Lagrange vector field (given by equations~\eqref{eq:Euler-Lagrange-field}), 
    \begin{align*}
        \contr{[\bar Y^C, \Gamma_L ]} \eta_\L
        &= \liedv{\bar Y^C} \contr{\Gamma_\L} \eta_\L - \contr{\Gamma_\L} \liedv{\bar Y^C} \eta_\L= -\liedv{\bar Y^C} E_\L - \contr{\Gamma_\L} \liedv{\bar Y^C} \left( \d z - \transp{\cal S} \circ \d \L \right)\\
        &= -\Delta (\liedv{\bar Y^C} \L)+ \liedv{\bar Y^C} \L- \contr{\Gamma_\L} \liedv{\bar Y^C}  \d z+ \contr{\Gamma_\L} \transp{\cal S} \circ \d \left(\liedv{\bar Y^C} \L\right) \\
        &= \liedv{\bar Y^C} \L - \contr{\Gamma_\L} \liedv{\bar Y^C}  \d z
        = \liedv{\bar Y^C} \L - \contr{\Gamma_\L} \d \zeta
        = \liedv{\bar Y^C} \L - \liedv{\Gamma_\L} \zeta\, ,
        \end{align*}
    where $\contr{\Gamma_\L}\transp{\cal S}=\Delta$ because $\Gamma_\L$ is a {\sc sode}. Thus, $\bar Y^V(\L)-\zeta$ is a dissipated quantity if and only if $\liedv{\bar Y^C} \L - \liedv{\Gamma_\L} \zeta$ vanishes.
\end{proof}

A particular case of extended natural symmetries are those with $\zeta=0$. That is, symmetries which are lifted from $Q$.

\begin{dfn}
    A diffeomorphism $\Phi \in \Diff (\R \times \T Q \times \R)$ is called a \emph{symmetry of the Lagrangian} if $\Phi^\ast \L = \L$ and $\Phi^\ast t=t$. In addition, if $\Phi$ is the canonical lift of some $\varphi \in \Diff(Q)$, then it is called a \emph{natural symmetry of the Lagrangian}.
    
    A vector field $Y \in \mathfrak{X}(\R \times \T Q \times \R)$ is called an \emph{infinitesimal symmetry of the Lagrangian} if $\liedv{Y} \L = 0$ and $\contr{Y} \tau=0$. In addition, if $Y$ is the complete lift of some $X \in \mathfrak{X}(Q)$, then it is called an \emph{infinitesimal natural symmetry of the Lagrangian}.
\end{dfn}


From Proposition \ref{prop:extLagSym}, we have the following.
\begin{cor}
    Every (infinitesimal) natural symmetry of the Lagrangian $\L$ is an (infinitesimal) strict Hamiltonian symmetry of $(\R \times \T Q\times \R, \d t, \eta_{\L}, E_\L)$.
\end{cor}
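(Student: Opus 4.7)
The plan is to observe that natural symmetries are precisely the $\zeta\equiv 0$ (respectively $\varphi_z = \id_\R$) subcase of extended natural symmetries, and then invoke Proposition \ref{prop:extLagSym}, whose conformal factor degenerates to the strict case under that specialization.

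For the infinitesimal statement, I would take $X\in \mathfrak{X}(Q)$ and view it as a split vector field on $Q\times\R$ whose local form \eqref{eq:local-Y} has vanishing $z$-component, i.e.\ $\zeta\equiv 0$. Its action-dependent complete lift $\bar X^C$ is then the usual complete lift of $X$ extended trivially to $\R\times\T Q\times\R$, and in particular $\zeta'\equiv 0$. The hypothesis $\liedv{\bar X^C}\L = 0$ becomes, trivially, the infinitesimal extended-symmetry condition $\liedv{\bar X^C}\L = \zeta'\L$. Applying Proposition \ref{prop:extLagSym} gives that $\bar X^C$ is an infinitesimal $\zeta'$-conformal Hamiltonian symmetry, that is, $\contr{\bar X^C}\tau = 0$, $\liedv{\bar X^C}\eta_\L = \zeta'\eta_\L = 0$, and $\liedv{\bar X^C}E_\L = \zeta' E_\L = 0$, which is exactly the definition of an infinitesimal strict Hamiltonian symmetry.

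The non-infinitesimal statement runs along identical lines: if $\Phi$ is the canonical lift of some $\varphi\in \Diff(Q)$, then $\Phi$ coincides with the action-dependent lift of $(\varphi,\id_\R)\in \Diff(Q\times\R)$, so $\Phi_z = \id_\R$ and $\Phi_z'\equiv 1$. The hypothesis $\Phi^\ast\L = \L$ is then the extended-symmetry condition $\Phi^\ast\L = \Phi_z'\L$, and Proposition \ref{prop:extLagSym} yields $\Phi^\ast \eta_\L = \eta_\L$ and $\Phi^\ast E_\L = E_\L$; combined with $\Phi^\ast t = t$ (which follows automatically from the fact that $\Phi$ acts as the identity on the $\R$-factor of time), this is the definition of a strict Hamiltonian symmetry.

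No substantive obstacle is expected; the entire analytical content is packaged in Proposition \ref{prop:extLagSym}, and the corollary is just the bookkeeping check that the lifting construction specializes correctly, namely that the conformal factor $\zeta'$ (or $\Phi_z'-1$) collapses to zero precisely when the symmetry is lifted from $Q$ alone rather than from $Q\times\R$.
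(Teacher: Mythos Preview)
Your proposal is correct and matches the paper's approach exactly: the paper simply states that the corollary follows from Proposition~\ref{prop:extLagSym}, and you have spelled out the straightforward specialization $\zeta\equiv 0$ (resp.\ $\Phi_z=\id_\R$) that makes the conformal factor vanish.
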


It is worth noting that a symmetry of the Lagrangian which is not natural is not, in general, a Hamiltonian symmetry. Moreover, in general, it is not an extended symmetry of the Lagrangian either.

\begin{exmpl}
   Consider the Lagrangian $L(t,x,v,z)=\frac{1}{2} v^2-V(t,x,z)$ on $\R\times \T\R\times \R$. Clearly, the vector field
   $$Y = v \frac{\partial }{\partial x} + \frac{\partial V}{\partial x} \frac{\partial }{\partial v}$$
   is an infinitesimal symmetry of the Lagrangian (but it is not natural). However, $Y(E_\L) \neq 0$. Moreover, we have $\eta_\L = \d z - v \d x$, so
   $$\liedv{Y} \eta_\L 
= - \frac{\partial V}{\partial x} \d x - v \d v \neq \rho \eta_\L\, ,  
$$ 
for any $\rho\in \mathcal{C}^\infty(M)$
\end{exmpl}
From Proposition \ref{prop:extsym} we have that:
\begin{cor}\label{cor_natural_Lagrangian}
    Let $Y$ be a vector field on $Q$ and assume that $\L$ is regular. Then $Y^C$ is an infinitesimal natural symmetry of $\L$ if, and only if, $Y^V(\L)$ is a dissipated quantity.
\end{cor}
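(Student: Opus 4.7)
The plan is to derive this corollary as a direct specialization of Proposition~\ref{prop:extsym} to the case $\zeta \equiv 0$. Any vector field $Y \in \mathfrak{X}(Q)$ may be regarded as a split vector field on $Q \times \R$ with trivial $\R$-component, so $Y$ takes the form~\eqref{eq:local-Y} with $\zeta = 0$. The first step is to observe that, under this identification, the action-dependent complete lift $\bar Y^{C}$ agrees with the usual complete lift $Y^{C}$ to $\T Q$, extended trivially along the $t$ and $z$ factors of $\R \times \T Q \times \R$; analogously $\bar Y^{V} = Y^{V}$, with local expression $Y^{i}(q)\,\partial/\partial v^{i}$.

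Next, I will translate the two conditions appearing in Proposition~\ref{prop:extsym} under the substitution $\zeta = 0$. The infinitesimal extended natural symmetry condition $\liedv{\bar Y^{C}} \L = \zeta' \L$ becomes $\liedv{Y^{C}} \L = 0$; combined with the automatic identity $\contr{Y^{C}} \tau = \contr{Y^{C}} \d t = 0$, which holds because $Y^{C}$ has no component along $\partial/\partial t$, this is exactly the definition of an infinitesimal natural symmetry of $\L$. Similarly, the associated quantity $\bar Y^{V}(\L) - \zeta$ collapses to $Y^{V}(\L)$.

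Putting these two observations together, the equivalence established in Proposition~\ref{prop:extsym} specializes directly to the claimed equivalence: $Y^{C}$ is an infinitesimal natural symmetry of $\L$ if and only if $Y^{V}(\L)$ is a dissipated quantity. I do not expect any substantive obstacle, since the corollary is essentially a transcription of Proposition~\ref{prop:extsym} in the case $\zeta = 0$; the only care required is in matching the definitions of natural versus extended natural symmetries and in invoking regularity of $\L$, which guarantees via Proposition~\ref{prop:regular-Lagrangian} and Theorem~\ref{thm:regular-lagrangian} that $(\R \times \T Q \times \R, \d t, \eta_{\L}, E_{\L})$ is a genuine cocontact Hamiltonian system with a uniquely determined Herglotz--Euler--Lagrange vector field $\Gamma_{\L}$, so that the notion of dissipated quantity is well-defined.
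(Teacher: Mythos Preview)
Your proposal is correct and follows exactly the approach of the paper: the corollary is stated immediately after Proposition~\ref{prop:extsym} with the preamble ``From Proposition~\ref{prop:extsym} we have that:'', so the paper's own proof is precisely the specialization $\zeta = 0$ that you describe.
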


\begin{exmpl}[Cyclic coordinate]
   Suppose that $\L$ has a cyclic coordinate, namely $\tparder{L}{q^i}=0$ for some $i \in \left\{1, \ldots, n\right\}$. Then, $\bar{Y}^C$ is an infinitesimal natural Lagrangian symmetry, where $Y= \tparder{}{q^i}$, and its associated dissipated quantity is the corresponding momentum $\tparder{\L}{v^i}$.
\end{exmpl}

\begin{prop}
    Infinitesimal symmetries of the Lagrangian, infinitesimal natural symmetries of the Lagrangian and infinitesimal extended natural symmetries of the Lagrangian close Lie subalgebras of $(\mathfrak{X}(\R \times \T Q \times \R), [\cdot, \cdot])$.
\end{prop}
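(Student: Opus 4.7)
The plan is to address the three classes in order, exploiting the fact that each one is defined by conditions that are preserved by the commutator thanks to general identities for Lie derivatives and a structural property of the (action-dependent complete) lift.

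For the first class, let $Y_1, Y_2 \in \mathfrak{X}(\R \times \T Q \times \R)$ satisfy $\Lie_{Y_i} \L = 0$ and $\contr{Y_i} \tau = 0$. First, $\Lie_{[Y_1, Y_2]} \L = \Lie_{Y_1} \Lie_{Y_2} \L - \Lie_{Y_2} \Lie_{Y_1} \L = 0$. Second, since $\tau = \d t$ is closed, $\Lie_{Y_i} \tau = \d(\contr{Y_i} \tau) = 0$, and hence
\begin{equation*}
    \contr{[Y_1, Y_2]} \tau = \Lie_{Y_1} \contr{Y_2} \tau - \contr{Y_2} \Lie_{Y_1} \tau = 0 \, .
\end{equation*}

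For the second class, I would invoke the standard fact that the complete lift of vector fields on $Q$ to $\T Q$ is a Lie algebra homomorphism, i.e., $[X_1^C, X_2^C] = [X_1, X_2]^C$, and extend it trivially in the $t$ and $z$ coordinates. If $X_1^C, X_2^C$ are infinitesimal natural symmetries of $\L$, they are in particular infinitesimal symmetries of $\L$, so by the first case $[X_1^C, X_2^C] = [X_1, X_2]^C$ is an infinitesimal symmetry of $\L$; it is natural by construction.

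For the third class, the strategy is to show that (i) the bracket of two split vector fields on $Q \times \R$ is again split, and (ii) the action-dependent complete lift is a Lie algebra homomorphism, namely $[\bar X_1^C, \bar X_2^C] = \overline{[X_1, X_2]}^C$. Claim (i) is immediate from the coordinate expression: if $X_i = X_i^k(q)\partial_{q^k} + \zeta_i(z) \partial_z$, then
\begin{equation*}
    [X_1, X_2] = \bigl(X_1^j \partial_j X_2^k - X_2^j \partial_j X_1^k\bigr) \partial_{q^k} + \bigl(\zeta_1 \zeta_2' - \zeta_2 \zeta_1'\bigr) \partial_z \, ,
\end{equation*}
which is again split. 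Claim (ii) is checked componentwise in Darboux coordinates. Given (i) and (ii), I would then verify that the defining condition $\Lie_{\bar Y^C} \L = \zeta' \L$ is closed under the commutator. Writing $Y_i = \bar X_i^C$, one has
\begin{equation*}
    \Lie_{[\bar X_1^C, \bar X_2^C]} \L = \Lie_{\bar X_1^C}(\zeta_2' \L) - \Lie_{\bar X_2^C}(\zeta_1' \L) \, .
\end{equation*}
Since $\bar X_i^C$ acts on any function of $z$ alone as $\zeta_i(z) \partial_z$, and using $\Lie_{\bar X_i^C} \L = \zeta_i' \L$, this collapses to $(\zeta_1 \zeta_2'' - \zeta_2 \zeta_1'') \L$, which equals $(\zeta_1 \zeta_2' - \zeta_2 \zeta_1')' \L$. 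By claim (ii) this is precisely the condition that $[\bar X_1^C, \bar X_2^C] = \overline{[X_1, X_2]}^C$ be an infinitesimal extended natural symmetry associated with the split field $[X_1, X_2]$ whose $z$-component is $\zeta_1 \zeta_2' - \zeta_2 \zeta_1'$.

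The main obstacle is the homomorphism property (ii). It requires a routine but lengthy coordinate computation, particularly the $\partial_{v^k}$-component, which involves second partial derivatives of the $X_i^k(q)$ in $q$; the cross terms $v^l \partial_l X_1^j \partial_j X_2^k - v^l \partial_l X_2^j \partial_j X_1^k$ arising from the Jacobian part of $\bar X_i^C$ must conspire with the Hessian terms to reproduce $v^l \partial_l [X_1, X_2]^k$. Once this is checked, the proof for each of the three classes is essentially a one-line consequence of general Lie-derivative identities.
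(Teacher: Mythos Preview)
Your proposal is correct and follows essentially the same approach as the paper. The only difference is one of emphasis: you explicitly isolate the homomorphism property $[\bar X_1^C,\bar X_2^C]=\overline{[X_1,X_2]}^C$ of the action-dependent complete lift as a claim to be verified, whereas the paper leaves this implicit and simply records the bracket $[Y_1,Y_2]$ on $Q\times\R$ alongside the computation of $\Lie_{[\bar Y_1^C,\bar Y_2^C]}\L$, tacitly assuming the lift respects brackets.
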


\begin{proof}
    If $Y_1, Y_2 \in \mathfrak{X}(\R \times \T Q \times \R)$ are symmetries of the Lagrangian $\L$, then
    \begin{align}
        & \liedv{[Y_1, Y_2]} \L = \left[\liedv{Y_1}, \liedv{Y_2} \right] \L = 0,\\
        & \contr{[Y_1, Y_2]} \tau = 0,
    \end{align}
    so $[Y_1, Y_2]$ is a symmetry of the Lagrangian. In particular, if $Y_1=X_1^C$ and $Y_2=X_2^C$ (for some $X_1, X_2\in \mathfrak{X}(Q)$) are natural symmetries of the Lagrangian, then $[Y_1, Y_2] = [X_1, X_2]^C$. Therefore, $[Y_1, Y_2]$ is also a natural symmetry of the Lagrangian.
    
    Similarly, suppose that $\bar Y_1^C$ and $\bar Y_2^C$ are extended natural symmetries of the Lagrangian $\L$, where
    $$Y_a = Y_a^i(q) \frac{\partial}{\partial q^i} + \zeta_a(z) \frac{\partial}{\partial z},\qquad a =1,2\, . $$
    Then,
    $$\liedv{[\bar Y_1^C, \bar Y_2^C]} \L 
    = \left[ \liedv{\bar Y_1^C}, \liedv{\bar Y_2^C}\right] \L
    = \left( \zeta_1 \zeta_2^{\prime \prime} - \zeta_2 \zeta_1^{\prime \prime}\right) \L
    = \frac{\d}{\d z}\left( \zeta_1 \zeta_2^{\prime } - \zeta_2 \zeta_1^{\prime }\right) \L
    \, ,$$
    but
    $$ [Y_1, Y_2] 
    = \left(Y_1^i \frac{\partial Y_2^j}{\partial q^i} - Y_2^i \frac{\partial Y_1^j}{\partial q^i}\right) \frac{\partial}{\partial q^j}
    + \left( \zeta_1 \zeta_2^{\prime} - \zeta_2 \zeta_1^{\prime }\right) \frac{\partial}{\partial z},
    $$
    so $[\bar Y_1^C, \bar Y_2^C]$ is an extended natural symmetry of $\L$. 
    
\end{proof}

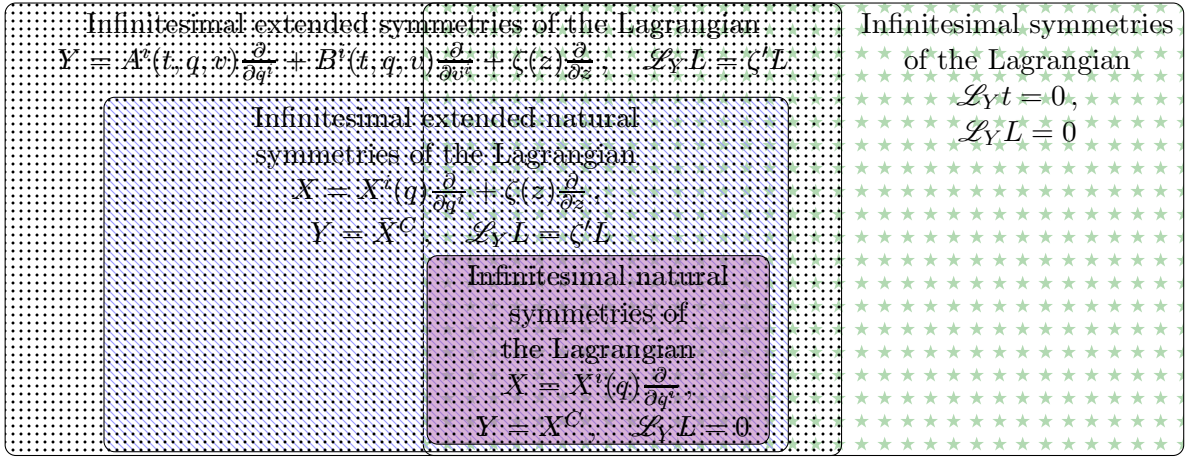
\begin{figure}[t]
    \centering  
  \begin{tikzpicture}[every text node part/.style={align=center}]
    \draw (5,0) node[minimum height=6cm, minimum width = 10cm, fill, draw, pattern=fivepointed stars, pattern color=green!30, rounded corners](coco){};
    \node[below left] at (coco.north east) {Infinitesimal symmetries\\of the Lagrangian\\$\liedv{Y} t=0\,,$\\$\liedv{Y}L=0$};
    \draw (0.3,-0.6) node[minimum height=4.7cm, minimum width = 9cm,, draw, pattern=north west lines, pattern color=blue!40, rounded corners](conham){};
    \node[below] at (conham.north) {Infinitesimal extended natural\\ symmetries of the Lagrangian\\$X=X^i(q) \frac{\partial}{\partial q^i}+ \zeta(z) \frac{\partial}{\partial z}$\,,\\ $\quad Y=\bar X^C\,,\quad \liedv{Y}L =\zeta' L$};
    \draw (2.3,-1.6) node[color=violet!60, minimum height=2.5cm, minimum width = 4.5cm, fill, semitransparent, rounded corners](strham){};
    \draw (2.3,-1.6) node[minimum height=2.5cm, minimum width = 4.5cm, draw, rounded corners](strham){};
    \node[below] at (strham.north) {Infinitesimal natural\\ symmetries of \\the Lagrangian\\$X=X^i(q) \frac{\partial}{\partial q^i}\,,$\\ $\quad Y=X^C\,,\quad \liedv{Y}L =0$};
    \draw (0,0) node[minimum height=6cm,minimum width=11cm,draw, pattern=dots, rounded corners] (gen) {};
    \node[below] at (gen.north) {Infinitesimal extended symmetries of the Lagrangian\\ $Y=A^i(t,q, v) \parder{}{q^i} + B^i(t,q,v) \parder{}{v^i}+ \zeta(z) \parder{}{z}\,,\quad \liedv{Y} L = \zeta' L$};
    \end{tikzpicture}
  
    \caption{Classification of infinitesimal Lagrangian symmetries and relations between them. Infinitesimal symmetries of the Lagrangian, infinitesimal natural symmetries of the Lagrangian and infinitesimal extended natural symmetries of the Lagrangian close Lie subalgebras.}
    \label{fig:infinitesimal_symmetries_Lagrangian}
\end{figure}

\begin{figure}[t]
    \centering  
  \begin{tikzpicture}[every text node part/.style={align=center}]
    \draw (0.3,-0.6) node[minimum height=4.7cm, minimum width = 9cm,, draw, pattern=north west lines, pattern color=blue!40, rounded corners](conham){};
    \node[below] at (conham.north) {Infinitesimal extended natural\\ symmetries of the Lagrangian\\$X=X^i(q) \frac{\partial}{\partial q^i}+ \zeta(z) \frac{\partial}{\partial z}$\,,\\ $\quad Y=\bar X^C\,,\quad \liedv{Y}L =\zeta' L$};
    \draw (1.3,-2.2) node[minimum height=8cm, minimum width = 12cm, draw, pattern=north east lines, pattern color=red!60, rounded corners](cartan){};
    \node[above] at (cartan.south) {Infinitesimal $\rho$-conformal Hamiltonian symmetries\\$\liedv{Y} t=0\,,\quad \liedv{Y}\eta_L=\rho \eta_L\,,$\\ $\liedv{Y}E_L=\rho E_L$};
    \draw (3,-2.3) node[color=yellow!60, minimum height=4cm, minimum width = 8cm, fill, semitransparent, rounded corners](strham){};
    \draw (3,-2.3) node[minimum height=4cm, minimum width = 8cm, draw, rounded corners](strham){};
    \node[above] at (strham.south) {Infinitesimal strict Hamiltonian symmetries\\$\liedv{Y} t=0\,,\quad\liedv{Y}\eta_L=0\,,\quad \liedv{Y}E_L=0$};
    \draw (2.3,-1.6) node[color=violet!60, minimum height=2.5cm, minimum width = 4.5cm, fill, semitransparent, rounded corners](strham){};
    \draw (2.3,-1.6) node[minimum height=2.5cm, minimum width = 4.5cm, draw, rounded corners](strham){};
    \node[below] at (strham.north) {Infinitesimal natural\\ symmetries of \\the Lagrangian\\$X=X^i(q) \frac{\partial}{\partial q^i}\,,$\\ $\quad Y=X^C\,,\quad \liedv{Y}L =0$};
    \end{tikzpicture}
  
    \caption{Relations between infinitesimal (extended) natural symmetries of the Lagrangian, conformal Hamiltonian symmetries and strict Hamiltonian symmetries.}
    \label{fig:infinitesimal_symmetries_Lagrangian_Hamiltonian}
\end{figure}
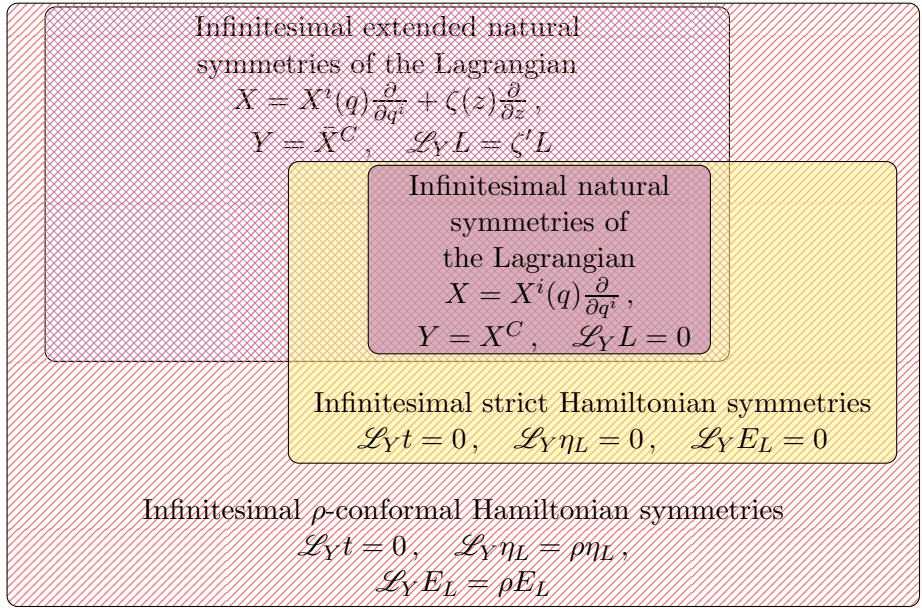

\subsection{Symmetries of the action}

Another relevant class of symmetry are transformations on the ``$z$" variable, or changes of action, which preserve the dynamics. This kind of transformations are used in \cite{de_leon_inverse_2022} to generate equivalent Lagrangians.
\begin{dfn}
    A diffeomorphism $\Phi\colon \R \times \T Q \times \R\to \R \times \T Q \times \R$  is a \emph{change of action} if, for any section $\gamma$ of the projection $\pr{\R \times \T Q}\colon \R \times \T Q \times \R \to \R \times \T Q$, we have
    $$\pr{\R \times \T Q}\circ\,\Phi\circ \gamma=\mathrm{Id}_{\R \times \T Q}\, .$$
    A vector field $Z \in \mathfrak{X}(\R\times \T Q \times \R)$ an \emph{infinitesimal change of action} if $\T\pr{\R \times \T Q} \circ Z=0$.
\end{dfn}

If a change of action has the form
    $$\Phi\colon (t, q, v, z) \mapsto \left(t, q, v, \Phi_z(t,q,v,z) \right)\,,$$
then, in particular $\dfrac{\partial \Phi_z}{\partial z}\neq0$ everywhere.

Clearly, the flow of an infinitesimal change of action is made up of changes of action. Moreover, if $Y\in\mathfrak{X}(\mathbb{R}\times Q\times\mathbb{R})$ is a {\sc sode} and $\Phi$ is a change of action, then $\Phi_*Y$ is also a {\sc sode}. 

    \begin{prop}
    A change of action
    $\Phi\colon \R \times \T Q \times \R\to \R \times \T Q \times \R$ of the form
    $$\Phi\colon (t, q, v, z) \mapsto \left(t, q, v, \Phi_z(t,q,v,z) \right)\,,$$
    is a generalized dynamical symmetry if, and only if,  
    $\Gamma_L(\Phi_z)=\L\circ\Phi$.
    
    An infinitesimal change of action $Z \in \mathfrak{X}(\R\times \T Q \times \R)$ with local expression
    $$
    Z = \zeta(t,q,v, z) \parder{}{z}
    $$
    is a generalized infinitesimal dynamical symmetry if, and only if, $\zeta$ is a dissipated quantity, i.e., $\Gamma_\L (\zeta) = \zeta \tparder{L}{z}$.    
    \end{prop}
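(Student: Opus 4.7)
The plan is to verify both halves by direct coordinate computation, using the local expression \eqref{eq:Euler-Lagrange-field_local} for the Herglotz--Euler--Lagrange vector field $\Gamma_\L$ and the definition of (generalized) dynamical symmetry from Section \ref{sec_symmetrie_Hamiltonian}. Observe first that since $\Phi$ fixes $t$ (resp.\ $Z$ has only a $\partial/\partial z$ component), the condition $\Phi^*t=t$ (resp.\ $\iota_Z\tau=0$) is automatic, so in both cases the content lies entirely in the second defining relation $\eta_\L(\Phi_*\Gamma_\L)=\eta_\L(\Gamma_\L)$ (resp.\ $\eta_\L([Z,\Gamma_\L])=0$).

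For the infinitesimal case, write $\Gamma_\L=\partial_t+v^i\partial_{q^i}+F^i\partial_{v^i}+\L\,\partial_z$ with $F^i$ the $v^i$-coefficient appearing in \eqref{eq:Euler-Lagrange-field_local}. Since $Z=\zeta\,\partial_z$ is purely vertical in $z$, a termwise commutator calculation collapses to
\begin{equation}
  [Z,\Gamma_\L]=\bigl(\zeta\,\partial_z\L-\Gamma_\L(\zeta)\bigr)\partial_z.
\end{equation}
Because $\eta_\L=\d z-(\partial\L/\partial v^i)\,\d q^i$ satisfies $\iota_{\partial_z}\eta_\L=1$, one obtains $\eta_\L([Z,\Gamma_\L])=\zeta\,\partial_z\L-\Gamma_\L(\zeta)$, and this vanishes precisely when $\Gamma_\L(\zeta)=\zeta\,\partial\L/\partial z$. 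To identify this with ``$\zeta$ is a dissipated quantity'' I will also record the short computation showing $\Rz^\L(E_\L)=-\partial\L/\partial z$ (which is routine using the explicit form of $\Rz^\L$ and of $E_\L$), so the displayed identity is exactly $\Gamma_\L(\zeta)=-\Rz^\L(E_\L)\,\zeta$.

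For the finite case the plan is to convert $\eta_\L(\Phi_*\Gamma_\L)=\eta_\L(\Gamma_\L)$ into the equivalent identity $(\Phi^*\eta_\L)(\Gamma_\L)=\eta_\L(\Gamma_\L)\circ\Phi$, using the standard pullback--pushforward relation $\eta_\L(\Phi_*X)\circ\Phi=(\Phi^*\eta_\L)(X)$. Because $\Phi$ fixes $t$, $q^i$, $v^i$ and only transforms $z$ via $\Phi_z$, one has $\Phi^*\d q^i=\d q^i$ and $\Phi^*\d z=\d\Phi_z$, so
\begin{equation}
  \Phi^*\eta_\L=\d\Phi_z-\bigl((\partial\L/\partial v^i)\circ\Phi\bigr)\d q^i.
\end{equation}
Evaluating on $\Gamma_\L$ gives $(\Phi^*\eta_\L)(\Gamma_\L)=\Gamma_\L(\Phi_z)-v^i(\partial\L/\partial v^i)\circ\Phi$. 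On the other side, $\eta_\L(\Gamma_\L)=-E_\L=\L-v^i\partial\L/\partial v^i$, so $\eta_\L(\Gamma_\L)\circ\Phi=\L\circ\Phi-v^i(\partial\L/\partial v^i)\circ\Phi$. The $v^i(\partial\L/\partial v^i)\circ\Phi$ terms cancel and the remaining equality is $\Gamma_\L(\Phi_z)=\L\circ\Phi$, as required.

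The only real subtlety, and what I would flag as the main potential pitfall, is the bookkeeping in the finite case: one must consistently distinguish $\eta_\L$ evaluated at $\Phi(p)$ (whose coefficient $\partial\L/\partial v^i$ is read off at the $z$-transformed point) from $\eta_\L$ at $p$. Recasting the condition via $\Phi^*$ from the outset avoids this pitfall. Everything else is an elementary coordinate calculation that relies only on the form of $\Phi$, the coordinate expression of $\Gamma_\L$, and the definitions recalled in Section \ref{sec_symmetrie_Hamiltonian}.
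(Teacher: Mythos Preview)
Your overall strategy is sound and close to the paper's, but there is a genuine slip in the infinitesimal half. The bracket $[Z,\Gamma_\L]$ does \emph{not} collapse to a pure $\partial_z$-term: since the $v^i$-coefficient $F^i$ of $\Gamma_\L$ depends on $z$ (through the term $-\L\,\partial^2\L/\partial z\partial v^j$ in \eqref{eq:Euler-Lagrange-field_local}), one gets
\[
  [Z,\Gamma_\L]=\zeta\,\parder{F^i}{z}\,\parder{}{v^i}+\bigl(\zeta\,\partial_z\L-\Gamma_\L(\zeta)\bigr)\parder{}{z}\,.
\]
Your conclusion survives, because $\eta_\L(\partial/\partial v^i)=0$, so the extra vertical piece drops out when you contract with $\eta_\L$; but you should state this rather than claim the bracket itself has only a $\partial_z$ component. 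The paper sidesteps this by using the Cartan-type identity $\iota_{[\Gamma_\L,Z]}\eta_\L=\Lie_{\Gamma_\L}\iota_Z\eta_\L-\iota_Z\Lie_{\Gamma_\L}\eta_\L$ together with \eqref{eq:Ham-eq-cocontact-vectorfields-equivalent}, which avoids ever writing the bracket in coordinates.

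For the finite case your pullback argument is correct and is a clean dual of the paper's computation, which instead writes $\Phi_\ast\Gamma_\L$ explicitly and uses that $\Phi_\ast\Gamma_\L$ is again a {\sc sode} (so $\theta_\L(\Phi_\ast\Gamma_\L)=\Delta(\L)$), then compares $\iota_{\Phi_\ast\Gamma_\L}\eta_\L$ with $\iota_{\Gamma_\L}\eta_\L$ directly. Your pullback route has the advantage you mention: it handles the base-point bookkeeping automatically. Both lead to the same cancellation of the $v^i(\partial\L/\partial v^i)\circ\Phi$ term.
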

    \begin{proof}
Given two {\sc sode} $Y$ and $X$, we have that $\theta_L(Y)=\theta_L(X)=\Delta(L)$ and $^t\mathcal{S}(Y)=\Delta$. Let $\Gamma_L$ be the Herglotz--Euler--Lagrange vector field of the system, given by equations~\eqref{eq:Euler-Lagrange-field}.
If $\Phi$ is a change of action, then
$$ \Phi_*\Gamma_L=\Gamma_L^t\parder{}{t}+\Gamma_L^{q^i}\parder{}{q^i}+\big(\Gamma_L^{v^i}\circ\Phi^{-1}\big)\parder{}{v^i}+\big(\Gamma_L(\Phi_z)\circ\Phi^{-1}\big)\frac{\partial}{\partial z}\,.$$ 
In addition,
$$\contr{\Phi_*\Gamma_L}\eta_\L=\Gamma_L(\Phi_z)\circ\Phi^{-1}-\theta_\L(\Gamma_\L)=\Gamma_L(\Phi_z)\circ\Phi^{-1}-\Delta(\L)
\,.
$$
On the other hand, $\contr{\Gamma_L}\eta_\L=-E_{\L}=\L-\Delta(\L)$. Therefore, $\Phi$ is a generalized dynamical symmetry (i.e. $\contr{\Phi_*\Gamma_L}\eta=\contr{\Gamma_L}\eta$) if, and only if, $\Gamma_L(\Phi_z)= \L\circ\Phi $.


Furthermore, if $Z$ is an infinitesimal change of action we have that
\begin{align}
    \contr{[\Gamma_\L, Z]} \eta_\L 
    &= \liedv{\Gamma_\L} \contr{Z} \eta_\L - \contr{Z} \liedv{\Gamma_\L} \eta_\L 
    = \Gamma_\L(\zeta) + \contr{Z} \big( \Rz^\L (E_\L) \eta_\L +\Rt^\L(E_\L) \tau\big)\\
    &= \Gamma_\L(\zeta) + \Rz^\L (E_\L) \zeta.
\end{align}
And the result is proved using the identity $\tparder{L}{z}=-\Rz^\L (E_\L)$
\end{proof}
This result motivates the following definition.

\begin{dfn}
    A diffeomorphism
    $\Phi\colon \R \times \T Q \times \R\to \R \times \T Q \times \R$ of the form
    $$\Phi\colon (t, q, v, z) \mapsto \left(t, q, v, \Phi_z(t,q,v,z) \right)$$
    is an \emph{action symmetry} if $\Gamma_L(\Phi_z)=\L\circ\Phi $.

    A vector field $Z \in \mathfrak{X}(\T Q \times \R)$ of the form $Z=\zeta(t, q, v, z) \tparder{}{z}$ is an \emph{infinitesimal action symmetry} if $\zeta$ is a dissipated quantity.
\end{dfn}

\section{Examples}
\label{sec_examples}

We compute several examples to illustrate in practice some of the concepts presented previously. We also show how symmetries and dissipated quantities can be used to study the dynamics of the $2$-body problem with time-dependent friction.

\subsection{The free particle with time-dependent mass and linear dissipation}
    Consider the cocontact Hamiltonian system $(\R\times \cT \R \times \R, \d t, \eta, H)$, {with natural coordinates $(t, q, p, z)$ where
    $\eta=\d z -p \d q$ is the contact form and
    \begin{equation}
        H = \frac{p^2}{2m(t)} + \frac{\kappa}{m(t)} z\, ,
    \end{equation}
    is the Hamiltonian function,} with $m$ a positive-valued function depending only on $t$, expressing the mass of the particle, and $\kappa$ a positive constant. The Hamiltonian vector field of $H$ is
    $$ X_H =\parder{}{t} + \frac{p}{m(t)}\parder{}{q} -  p\frac{\kappa}{m(t)} \parder{}{p} + \left(\frac{p^2}{2m(t)} - \frac{\kappa}{m(t)} z\right)\parder{}{z}\,. $$
    {
    Its integral curves are given by
    \begin{equation}
       \begin{dcases}
    		\dot q = \frac{p}{m(t)}\,,
    		\\
    	   \dot p = -  p\frac{\kappa}{m(t)} \,,
    		\\
    	   \dot z =  \frac{p^2}{2m(t)} - \frac{\kappa}{m(t)} z\,,		
    	\end{dcases}	  
    \end{equation}
    which yield
    \begin{equation}
        \begin{dcases}
         & q(t) =  \int _0^t\frac{\exp \left(\int _0^u-\frac{\kappa }{m(s)}\dd s\right) p_0}{m(u)}\dd u+q_0\, ,\\
         & p(t) = p_0 \exp \left(\int _0^t-\frac{\kappa }{m(s)}\dd s\right)\, ,\\
         & z(t) = \exp \left(\int _0^t-\frac{\kappa }{m(v)}\dd v\right) \int _0^t\frac{\exp \left( \int _0^w-\frac{\kappa }{m(s)}\dd s\right) p_0^2}{2 m(w)}\dd w+z_0 \exp \left(\int _0^t-\frac{\kappa }{m(v)}\dd v\right)\, ,
        \end{dcases}
    \end{equation}
    where $q_0=q(0),\, p_0=p(0),\, z_0=z(0)$ are the initial conditions. The term of $H$ linear in the variable $z$ permits to model a damping phenomena. As a matter of fact, in the particular case where $m(t)$ is constant the linear momenta (and hence the velocity) of the system decreases exponentially.
    }

    The function $f(t,q,p,z)=\exp \left(- \displaystyle \int _0^t\frac{\kappa}{m(s)}\d s\right)$ is a dissipated quantity. Hence, by Theorem \ref{Noether_thm}, the vector field
    \begin{equation}
        Y_f = X_f - \Rt 
        = -\exp \left(- \int _0^t\frac{\kappa}{m(s)} \d s\right) \parder{}{z}
        = -\exp \left(- \int _0^t\frac{\kappa}{m(s)} \d s\right)  \Rz
    \end{equation}
    is a generalized infinitesimal dynamical symmetry. In addition, one can verify that $Y_f$ is an infinitesimal dynamical symmetry, namely $Y_f$ commutes with $X_H$. Now,
    \begin{equation}
        Y_f (H) 
        = -\exp \left(- \int _0^t\frac{\kappa}{m(s)} \d s\right) \Rz(H)\, ,
    \end{equation}
    and
    \begin{equation}
        \liedv{Y_f} \eta
        = \dd \left( -\exp \left(- \int _0^t\frac{\kappa}{m(s)} \d s\right) \contr{\Rz} \eta \right)
        = -\dd \left(\exp \left(- \int _0^t\frac{\kappa}{m(s)} \d s\right) \right)\, ,
    \end{equation}
    so $Y_f$ is a $(0,g)$-Cartan symmetry, where $g=-\exp \left(-\displaystyle\int _0^t\frac{\kappa}{m(s)} \d s\right)$.

    Moreover,
    $f_2(t,q,p,z)=p$ is also a dissipated quantity, whose associated generalized infinitesimal dynamical symmetry is
    $$ Y_{f_2} = \parder{}{q}\,. $$
    It is clear that $Y_{f_2}$ is an infinitesimal dynamical symmetry, i.e., $Y_{f_2}$ commutes with $X_H$. Moreover, $\liedv{Y_{f_2}} \eta = 0$
    and $Y_{f_2}(H)=0$, so $Y_{f_2}$ is an infinitesimal strict Hamiltonian symmetry.

    The Lagrangian counterpart of this system is characterized by the Lagrangian function $L\colon \R \times \T \R \times \R \to \R$ given by 
    \begin{equation}
        \L = m(t)\frac{v^2}{2} - \frac{\kappa}{m(t)} z\,.
    \end{equation}
    The vector field $Z\in \mathfrak{X}(\R\times \T \R \times \R)$ with local expression
    \begin{equation}
        Z = \zeta(t,q, v,z) \parder{}{z}
        = \exp \left(- \int _0^t\frac{\kappa}{m(s)} \d s\right) \parder{}{z}
    \end{equation}
    is an infinitesimal action symmetry, since it is an infinitesimal change of action and we know that $\zeta$ is a dissipated quantity.

\subsection{An action-dependent central potential with time-dependent mass}
Consider a Lagrangian function $L\colon \R \times \T \R^2 \times \R \to\R$ of the form
\begin{equation}
    \L {(t, x, y, v_x, v_y, z)} = \frac{m(t)}{2} \left(v_x^2 + v_y^2\right) - V \left(t,(x^2+y^2), z\right)\, ,
\end{equation}
where $m(t)$ is a positive-valued function. Let $Y\in \mathfrak{X}(\R^2)$ be infinitesimal generator of rotations on the plane, namely,
\begin{equation}
    Y = -y \parder{}{x} + x \parder{}{y}\, .
\end{equation}
Its complete lift is given by
\begin{equation}
    \bar{Y}^C = -y \parder{}{x} + x \parder{}{y} - v_y \parder{}{v_x} + v_x \parder{}{v_y} \, ,
\end{equation}
and its vertical lift is 
\begin{equation}
    \bar{Y}^V = -y \parder{}{v_x} + x \parder{}{v_y}\, .
\end{equation}
Clearly, $\bar{Y}^C$ is an infinitesimal natural symmetry of the Lagrangian, i.e., $\bar{Y}^C(\L)=0$. Hence, by Corollary \ref{cor_natural_Lagrangian},
$$\bar{Y}^V(\L) = m(t)\left(-y v_x + x v_y\right)$$
is a dissipated quantity. This quantity is the angular momentum for a particle with time-dependent mass.

\subsection{The two-body problem with time-dependent friction}

The $2$-body problem describes the dynamics of two particles under the effects of a force that depends on the distance between the particles, usually the gravitational force. To model time-dependent friction, we will add a linear term on the action in the Lagrangian, with a time-dependent coefficient. {The two-body problem is one of the most important problems in celestial mechanics. The addition of a friction term may allow to describe the motion of celestial bodies in a dissipative medium.}

The phase space is $\mathbb{R}\times \T\mathbb{R}^6\times \mathbb{R}$, with coordinates $(t,\bm{q}^1,\bm{q}^2,\bm{v}^1,\bm{v}^2,z)$. The superindex denotes each particle, and the bold notation is a shorthand for the three spatial components, namely $\bm{q}^1=(q_1^1,q_2^1,q_3^1)$ and $\bm{q}^2=(q_1^2,q_2^2,q_3^2)$. The relative distance between the particles is $\bm{r}=\bm{q}^2-\bm{q}^1$, whose (Euclidean) length will be denoted $r = \vert\bm{r}\vert$.

The Lagrangian function is
$$
L=\frac12m_1\bm{v}^1\cdot\bm{v}^1+\frac12m_2\bm{v}^2\cdot\bm{v}^2-U(r)-\gamma(t)z\,,
$$
where $m_1,m_2\in\mathbb{R}$ are the masses of the particles which we assume to be constant, $U(r)$ is the central potential and $\gamma$ is a time-dependent function. The Lagrangian energy is
$$
E_L=\frac12m_1\bm{v}^1\cdot\bm{v}^1+\frac12m_2\bm{v}^2\cdot\bm{v}^2+U(r)+\gamma(t)z\,,
$$
and the cocontact structure is given by the one-forms
$$
\eta=\d z-m_1\bm{v}^1\cdot\d\bm{q}^1-m_2\bm{v}^2\cdot\d\bm{q}^2\,, \quad \tau=\d t\,.
$$
The evolution of the system is given by the Herglotz--Euler--Lagrange vector field $\Gamma_L$, defined by equations~\eqref{eq:Euler-Lagrange-field} and with local expression~\eqref{eq:Euler-Lagrange-field_local}.
Its solutions satisfy the Herglotz--Euler--Lagrange equations:
\begin{align}
       m_1\dot{\bm{v}}^1&=\bm{F}-\gamma(t) m_1\bm{v}^1\,,\label{eq:HEL2body1}
    \\
    m_2\dot{\bm{v}}^2 &= -\bm{F}-\gamma(t) m_2\bm{v}^2\,. \label{eq:HEL2body2}
\end{align}
The dot notation indicates time derivative and $\displaystyle\bm{F}=-\frac{\d U}{\d r}\frac{\bm{r}}{r}$ is the force of the potential $U$. 

Proceeding as in the classical $2$-body problem, we study the evolution of the center of masses
$$
\bm{R}=\frac{m_1\bm{q}^1+m_2\bm{q}^2}{m_1+m_2}\,.
$$
Since $\Gamma_L$ is a {\sc sode}, we have that
$$
\Gamma_L(\bm{R})=\frac{m_1\bm{v}^1+m_2\bm{v}^2}{m_1+m_2}=\dot{\bm{R}}\,,
$$
and
$$
\Gamma_L(\dot{\bm{R}})=-\gamma\dot{\bm{R}}\,.
$$
That is, every component of $\dot{\bm{R}}$ is a dissipated quantity. Along a solution, it evolves as
$$
\dot{\bm{R}}(t)=\dot{\bm{R}}_0e^{-\int \gamma(t)\d t}\,.
$$
In particular, if $\gamma$ is a positive constant, as the time increases the center of mass tends to move on a line with constant speed $\dot{\bm{R}}_0$. By Noether's Theorem \ref{Noether_thm}, the corresponding generalized infinitesimal dynamical symmetries are $\bm{Y_{\dot{R}}}=X_{\dot{\bm{R}}}-R^L_t$, where $R^L_t$ is subtracted to every component. A short computation shows that
$$
\bm{Y_{\dot{R}}}=\frac{1}{m_1+m_2}\left(\frac{\partial}{\partial \bm{q}^1}+\frac{\partial}{\partial \bm{q}^2}\right)\,.
$$
Each component of $\bm{Y_{\dot{R}}}$ is an action dependent complete lift and $\Lie_{\bm{Y_{\dot{R}}}}L=0$ therefore, they are infinitesimal natural symmetries of the Lagrangian.

The fact that the center of mass is moving in a very concrete way, may indicate that one could express the system using only the relative position. Indeed, from equations~\eqref{eq:HEL2body1} and \eqref{eq:HEL2body2} one derives
$$
\mu\ddot{\bm{r}}=-\bm{F}-\gamma\mu\dot{\bm{r}}\,,
$$
where $\mu=\dfrac{m_1m_2}{m_1+m_2}$ is the reduced mass.  This equation can also be derived from the Lagrangian $L_\mu=\frac{1}{2}\mu\dot{\bm{r}}\cdot\dot{\bm{r}}-U(r)-\gamma z$. The angular momentum is 
$$
\bm{L}=\mu\bm{r}\times\dot{\bm{r}}\,.
$$
Each component is a dissipated quantity:
$$
\Gamma_{L}(\bm{L})=-\gamma\bm{L}\,.
$$
The angular momentum along a solution is
$$
\bm{L}(t)=\bm{L}_0e^{-\int \gamma(t)\d t}\,.
$$
Since the direction of $\bm{L}$ remains constant, the movement takes place on a plane perpendicular to $\bm{L}_0$. If $\gamma$ is a positive constant, the angular momentum tends to $0$. The associated generalized infinitesimal dynamical symmetries are
$$
\bm{Y_{L}}=X_{\bm{L}}-R^L_t=\bm{r}\times\left(\frac{1}{m_2}\frac{\partial}{\partial \bm{q}^2}-\frac{1}{m_1}\frac{\partial}{\partial \bm{q}^1}\right)-\dot{\bm{r}}\times\left(\frac{1}{m_2}\frac{\partial}{\partial \bm{v}^2}-\frac{1}{m_1}\frac{\partial}{\partial \bm{v}^1}\right)
$$
Each component of $\bm{Y_{L}}$ is an action dependent complete lift and $\Lie_{\bm{Y_{L}}}L=0$, therefore they are infinitesimal natural symmetries of the Lagrangian.

Finally, the Lagrangian energy $E_L$ evolves as
$$
\Gamma_L(E_L)=-R_z^L(E_L)E_L+R_t^L(E_L)=-\gamma E_L+\dot{\gamma}z\,,
$$
and it is not a dissipated quantity due to the time-dependence of $\gamma$.

The evolution of the mechanical energy, namely the sum of the kinetic and the potential energies,
$$E_{\rm mec}=\frac12m_1\bm{v}^1\cdot\bm{v}^1+\frac12m_2\bm{v}^2\cdot\bm{v}^2+U(r)\,$$ 
is given by
$$
\Gamma_L(E_{\rm mec})=-\gamma(t)\left(m_1\bm{v}^1\cdot\bm{v}^1+m_2\bm{v}^2\cdot\bm{v}^2\right)\,.
$$
We could proceed by rewriting the reduced system in polar coordinates and describe the possible orbits. Unfortunately, in this case it is not evident how to express the relation between the radial and angular coordinates.
\section{Conclusions and further research}
\label{sec_conclsuions}

In this paper, we have characterized the symmetries and dissipated quantities of time-dependent contact Hamiltonian and Lagrangian systems. Firstly, we have studied generalized infinitesimal dynamical symmetries, a type of symmetries which are in bijection with dissipated quantities. After that, we have considered other types of symmetries which preserve (up to a conformal factor) additional objects, such as the cocontact structure or the Hamiltonian function. Moreover, making use of the canonical structures of the tangent bundle, we have discussed Lagrangian symmetries and symmetries of the action. We have concluded with three illustrative examples: the free particle with time-dependent mass and linear dissipation, the action-dependent central potential with time-dependent mass, and the two-body problem with time-dependent friction.

{In particular, the two-body problem could be interesting in celestial mechanics, where the friction could be used to model the damping caused by the medium. The formalism presented in this paper may also be applied to more complex systems in celestial mechanics. In a future work, we plan to extend this study to the restricted three-body problem with friction. It would be particularly interesting to study how the friction affects the stability of the system.}

The study of symmetries and dissipated quantities made in this work is the first step towards investigating the symmetries and dissipation laws in non-conservative field theories using the $k$-(co)contact \cite{Gaset2020,Gaset2021a,Riv-2022} and multicontact \cite{LGMRR-2022} settings. Furthermore, the classification of symmetries could provide a new insight towards a reduction method for time-(in)dependent contact systems.

\section*{Acknowledgements}
We wish to express our gratitude to the referee for his/her valuable comments. J.~Gaset and X.~Rivas acknowledge financial support of the Ministerio de Ciencia, Innovaci\'on y Universidades (Spain), projects PGC2018-098265-B-C33 and D2021-125515NB-21. 
A.~López-Gordón received financial support from the Spanish Ministry of Science and Innovation (MCIN/AEI/ 10.13039/501100011033), under the grants PID2019-106715GB-C21 and CEX2019-000904-S and the predoctoral contract PRE2020-093814. 
X.~Rivas also acknowledges financial support of  the Novee Idee 2B-POB II project PSP: 501-D111-20-2004310 funded by the ``Inicjatywa Doskonałości - Uczelnia Badawcza'' (IDUB) program.


\let\emph\oldemph

\printbibliography

\end{document}